\documentclass[11pt, reqno]{article}
\usepackage{eurosym}
\usepackage{amssymb}
\usepackage{amsmath}
\usepackage{amsfonts}
\usepackage[numbers,compress]{natbib}
\usepackage{graphicx}
\usepackage{color, colortbl}
\usepackage[table]{xcolor}
\usepackage{dblfloatfix}
\usepackage[flushleft]{threeparttable}
\usepackage{xcolor}
\usepackage{hyperref}

\newtheorem{proposition}{Proposition}
\newtheorem{proof}{Proof}
\newtheorem{corollary}{Corollary}
\newtheorem{assumption}{Assumption}
\newtheorem{theorem}{Theorem}
\newtheorem{remark}{Remark}
\begin{document}

\title{Optimal growth under irreversible pollution: The Forster-Ramsey model\thanks{This paper was started when Hritonenko and Yatsenko were visiting the Centre for Unframed Thinking (CUT), Rennes School of Business, in Summer 2022, and completed during the visit of Sahlin in Marseille in Spring 2026. We would like to thank Carmen Camacho for careful reading and useful suggestions. Boucekkine acknowledges funding from the French
government under the “France 2030” investment plan managed by the French National Research Agency (reference: ANR-17-
EURE-0020), and from the CNRS International Research Network grant $E3E$, 2025-2029. The usual disclaimer applies.} }
\author{
\parbox{35mm} {Raouf Boucekkine\footnote{Corresponding author. Aix-Marseille School of Economics; Senior fundamental research chair, Institut Universitaire de France. E-mail:  raouf.boucekkine@univ-amu.fr
 }} \and
\parbox{35mm} {Natalia Hritonenko\footnote{Department of mathematics, Prairie View A\&M University, USA. E-mail: nahritonenko@outlook.com}} \and
\parbox{30mm} {Yuri Yatsenko\footnote{Department of mathematics, Houston Baptist University, USA. E-mail: yyatsenko@outlook.com}} \and
\parbox{30mm} {Ullrika Sahlin\footnote{Department of earth and environmental science, Lund University, Sweden. E-mail: ullrika.sahlin@mgeo.lu.se}}}
\date{\today }
\maketitle

\begin{abstract}
We study optimal growth in a Ramsey economy with irreversible pollution, where the natural absorption capacity is non‑monotonic and non-concave \emph{\`a la} Forster. 
In the absence of pollution control, the capital dynamics decouple from pollution, and we characterize a simple threshold on the maximal absorption capacity above which the optimal path is sustainable and below which an optimal regime of irreversible pollution  sets in. We then introduce an optimal abatement pollution control to assess whether irreversible pollution ceases to be optimal.
We derive a four‑dimensional dynamical system for capital, consumption, pollution and the (shadow) cost of pollution, coupled with an algebraic rule for optimal abatement, which is eventually reduced to dimension three, with abatement linearly entering the pollution dynamics.
Our main result is that optimal abatement shifts the irreversibility threshold but does not eliminate it. More importantly, we show that embedding the Forster's pollution dynamics into a Ramsey growth model generates a novel hybrid ecological irreversibility condition that is invisible in pollution-only frameworks. Coupling ecosystem curvature with the social discount rate, it is structurally robust and preeminent over all feasibility conditions amenable to policy intervention, and defines the boundary between a world where sustainability is achievable and one where it is not. 
\end{abstract}

\smallskip

\textbf{Keywords}: Irreversible pollution, optimal growth, optimal pollution control, optimal irreversible regimes, ecological sustainability 
\vskip 0.1cm
\textbf{JEL codes}: O44, Q50, Q52, Q57

\newpage

\section{Introduction}
As theorized by Forster (1975) half a century earlier, and
developed later by Tahvonen and Withagen (1996), pollution irreversibility is reached
when the decay rate of pollution (or Nature's capacity to absorb pollution)
declines sharply above a certain level of pollution leading to bifurcation
phenomena, formally generated by the non-concave nature of the embedded ecological
models. A leading example of such a phenomenon is the so-called shallow lake problem (see e.g,  Maler, 2000): 
small variations in phosphorus loads in such lake ecosystems may lead to the emergence of tipping points, eventually causing
significant losses in ecosystem services.

In the recent years, numerous cases of irreversibility have been reported in general science reviews. These new accounts for irreversible non-linear dynamics undermining Nature’s ability to absorb pollution are quite diverse, and concern either global or local pollution cases. On the global scale, Ke et al (2024) documented two years ago a lower capacity of land and oceans to serve as sinks for carbon emissions, which can be explained by increased emission from forest fires, drought and extreme heat, factors that are caused by GHG. The 2026 Summer season will certainly not alleviate these unfavorable circumstances. Chemical pollution is another documented case of irreversible pollution both at the global and local scales. For instance, the ongoing accumulation of plastic in the environment shows that the rate at which plastic pollution enters an area exceeds the rate of natural removal processes or cleanup actions (MacLeod et al 2021), pretty much in line with the early conceptualization of irreversible pollution in the 70s. The efficiency of decay of plastic pollution depends on biological degradation, which in turn can be negatively affected by the nano particles because of weathering. On the top of that, mixtures of chemicals with low biodegradability and long range transportation could generate a pollution stock of chemical cocktails (Persson et al 2013).

Of course, a great deal of the literature concerned with irreversible pollution is devoted to the following question: Are irreversible pollution avoidable? If so, how? In their pioneering work, Tahvonen and Withagen (1996)
demonstrated that avoiding the threat of irreversible pollution thresholds is not
granted, even in the frame of a benevolent regulator. Extending the latter to a differential game set-up, Boucekkine et al. (2023) proved more recently that even full cooperation among players cannot
always prevent this unpleasant outcome. Furthermore adding uncertainty through possible catastrophic Poisson jumps to the irreversible pollution regime, Boucekkine et al. (2025) established that as the actual pollution level becomes closer to the irreversibility thresholds, decision makers consider that irreversible pollution is unavoidable and increase production and thus, pollution accordingly.\footnote{The concept of avoidability of irreversibile pollution regimes has been initially suggested and discussed by Clarke and Reed (1994).}

In light of these papers (see also Leandri and Tidball, 2019, for a more recent simulation-based analysis), the need for a much tighter control of Nature's absoption capacity has become urgent. This idea is pushed quite far by El Ouardighi et al (2014, 2020) who build up models allowing firms to control the decay rate of pollution. However the way the latter is endogenized in their models eventually
nullifies the non-concavity inherent in the original Forster problem, leading them to deal with
softer irreversibility constraints. It is worth pointing out at this stage that all the research works cited so far are not growth models in the sense that they only include a single state variable, the pollution stock (or environmental quality).\footnote{This remark also applies to Toman and Withagen (2000).} Introducing growth means including another state variable, the capital stock, which accumulates along the growth process. However, it has been quite neatly shown by Jones and Manuelli (2001) that the pollution paths are conditioned by the level of development of countries, to consider the cross-country level for example. The main contribution of our paper is to explore the optimality of irreversible pollution paths in a Ramsey growth model with a central planner choosing both the pollution and development paths through the optimal investment and pollution abatement controls subject to the non-concave strong irreversibility constraint first posed by Forster (1975).

Indeed, the overwhelming majority of growth models introducing pollution only consider reversible pollution through a constant rate of decline of the pollution stock, featuring a constant Nature's capacity to absorb pollution. A noticeable exception is the major work by Prieur (2009) who found, in an overlapping-generations model \emph{\`a la} Jones and Pecchenino (1994),  that even though firms can optimally decide about their level of pollution abatement, this might not be enough to escape from the irreversible pollution regime in the long term. Concretely, he found that allowing firms to choose optimally their
maintenance/pollution abatement efforts will not always prevent the economy to get
into the irreversible regime under the Forster hard irreversibility
constraint.\footnote{%
Bonneuil and Boucekkine (2016) is a singular contribution to the area. It
uses viability theory to depict \emph{viable} policies to escape from the
irreversible regime in a standard pollution model: it turns out that it's
not always possible.} This said, there exist a quite rich literature studying the relationship between growth and pollution and the inherent tradeoffs in optimal or general equilibrium growth models with reversible pollution starting with textbooks expositions (see the earlier analysis in Barro and Sala-i-Martin, 2003, chapter 2) and the seminal work of Stokey (1998) addressing these issues in both neoclassical (exogenous) growth settings and with endogenous growth. Other remarkable works in the same vein include, among others, Bovenberg and Smulders (1996), Byrne (1997), Economides and Philippopoulos (2008), and more recently, Amigues and Durmaz (2019), the latter using a two-sector model of growth.

To the best of our knowledge, there is no published work combining optimal growth \emph{\`a la} Ramsey with irreversible pollution \emph{\`a la} Forster. One of the possible reasons this task has not been undertaken is its technical complexity: two state variables (stocks of capital and pollution) and non-concavity yielding multiple solutions. As outlined above, earlier literature combining optimal growth theory with ecological dynamics (Tahvonen and Withagen, 1996; Toman and Withagen, 2000; Jones and Manuelli, 2001) did stress that the shape of the natural pollution absorption function is critical: if absorption capacity is of the Forster type, multiple steady states may arise, and pollution can escape to an irreversible regime. However, most of these works employ highly stylized, low‑dimensional models that abstract from the full Ramsey macroeconomic structure.

In this paper we embed irreversible pollution into a standard Ramsey growth model with Cobb‑Douglas production, log consumption utility and disutility of pollution.
We first analyze the benchmark case without any abatement.
There, the famous dichotomy of the Ramsey model applies: capital and consumption evolve independently of pollution, so the problem reduces to the classic Ramsey dynamics plus a one‑way coupled pollution equation.
We show that, depending on a simple threshold involving the maximal absorption capacity, either pollution converges to a finite steady state or it grows without bound, and when it grows, this explosive path remains optimal in the sense of the optimal control problem. This is due to the fact that within the Ramsey model set-up, if the pollution stocks diverges, it grows at a linear rate, therefore the objective function of the central planner remains bounded as long as it embeds exponential time discounting. The natural next question, which is actually the main motivation of this paper, is whether introducing a pollution control instrument can eliminate the optimal irreversible regime.
We show that the corresponding Hamiltonian first‑order conditions yield  a four‑dimensional system for capital, consumption, pollution, and the shadow cost of pollution complemented by an algebraic equation for the optimal abatement rule.
This analytical tractability allows us to derive a sharp and novel result.
Optimal abatement does \emph{not} eliminate the possibility of irreversible pollution.
Instead, it changes the nature of the irreversibility criterion. We show that environmental policy can expand the parameter space for sustainability, but it cannot overcome fundamental geophysical limits when the natural decay process itself is insufficiently responsive. 

Indeed, a key original theoretical result of our work is that a sustainable steady state exists only if the natural absorption non-concave \(f(P)\) satisfies
\[
f'(P) + \rho>0
\]
for some positive pollution level \(P\), where \(\rho>0\) is the rate of time preference.
Much of the existing literature on environmental policy in dynamic settings focuses on the design of instruments, such like taxes, abatement subsidies, emission caps, that can steer the economy toward a desirable steady state. Implicitly, this literature treats sustainability as an economic problem: given the right incentives, the economy can be made to converge. Our analysis qualifies this optimism in an important way. We show that the conditions for sustainability are not all of the same kind. Some, we would call feasibility conditions, are indeed amenable to policy intervention: they depend on the level of abatement effort, the structure of technology, and the parameters of preferences, and they can be relaxed by sufficiently ambitious environmental policy. But one condition stands apart. The requirement that $f'(\bar{P}) + \rho > 0$ at any candidate steady state, where $f$ is the ecosystem's pollution absorption function and $\rho$ is the social discount rate, is not a feasibility condition. It is an ecological condition, determined by the biophysical properties of the natural system and its interaction with the planner's time horizon. When this condition fails, no saddlepoint equilibrium exists, and no policy instrument operating through the economic margin can restore one. The economy is irreversibly locked out of sustainability, not by a failure of policy, but by a structural property of the ecosystem. Abatement can shift the position of steady states and expand the set of feasible sustainable paths, but it cannot substitute for ecological regenerative capacity. This distinction, between what economics can fix and what it cannot, is one of the most consequential insights that the theory of optimal growth with pollution can offer to the policy debate.

The condition $f'(\bar{P}) + \rho > 0$ derived in this paper has no direct antecedent in the existing literature on pollution and optimal growth, for a simple structural reason: it cannot arise in models without capital accumulation. In the canonical pollution-only frameworks of Forster (1975) and Brock and Starrett (2003), the state space is one-dimensional and stability is determined by the sign of $-f'(\bar{P})$ alone — a purely ecological condition. The social discount rate plays no role in the stability analysis. It is only when pollution dynamics are embedded in a full Ramsey growth model, with endogenous capital accumulation and an optimally chosen consumption path, that the planner's time preference $\rho$ enters the stability condition alongside the curvature of the absorption function. The resulting condition is neither purely ecological nor purely economic: it is a hybrid, reflecting the interaction between the regenerative capacity of the ecosystem and the impatience of the social planner. When this condition fails, the economy is structurally locked out of sustainability in a sense that has no analogue in simpler frameworks, and that no policy instrument operating through the economic margin can remedy.

The analytical foundations of our approach draw on two distinct bodies of work. The characterisation of optimal paths under non-concavity in capital accumulation was established by Nishimura and Dechert (1983), who demonstrated that non-concave production functions generate multiple steady states and render optimal trajectories history-dependent. In the environmental domain, Brock and Starrett (2003) showed that analogous non-concavities in pollution dynamics, arising from hump-shaped natural absorption functions, produce irreversibility and bistability in ecosystem management problems. Our contribution is to embed the non-convex pollution structure of Brock and Starrett (2003) within a Ramsey growth framework of the type analysed by Nishimura and Dechert (1983), thereby generating a condition for sustainability, $f'(\bar{P}) + \rho > 0$, that is absent from either framework considered in isolation.

The remainder of the paper is organised as follows. 
Section 2 reviews the benchmark uncontrolled model and establishes the basic irreversibility threshold. 
Section 3 introduces the controlled model with abatement as a control, derives the reduced dynamical system, characterizes the steady states and proves the main irreversibility theorem. 
Section 4 concludes. 
All technical proofs are gathered in the Appendix.

\section{Benchmark analysis: the Model with Uncontrolled Pollution}
\label{sec:baseline}

\subsection{The Ramsey problem}

As explained in the Introduction, we start exploring the properties of a Ramsey problem (as in the Barro and Sala-i-Martin textbook, 2003, Chapter~2) when irreversible pollution is accounted for.
We first ignore pollution control instruments to have an accurate idea about the impact of irreversible pollution on the shape of optimal trajectories without pollution control.
We shall add such an instrument, namely abatement, in Section~\ref{sec:abatement}.

Let us consider the following Ramsey model subject to pollution:
\begin{subequations}
\begin{align}
\max_{I(t), C(t)} \quad & \int_0^\infty e^{-\rho t} \left[ \ln C(t) - \eta \frac{P(t)^{1+\mu}}{1+\mu} \right] dt, \label{eq:obj_uncontrolled} \\
\text{s.t.} \quad & A K(t)^\alpha = I(t) + C(t), \label{eq:resource_uncontrolled} \\
& \dot K(t) = I(t) - \delta K(t), \quad K(0)=K_0, \label{eq:kapital_uncontrolled} \\
& \dot P(t) = \gamma A K(t)^\alpha - f(P(t)), \quad P(0)=P_0, \label{eq:pollution_uncontrolled}
\end{align}
\end{subequations}
with \(I(t) \ge 0\) and \(C(t) \ge 0\).
Here \(\rho>0\) is the rate of time preference, \(A>0\) and \(0<\alpha<1\) are the parameters of the Cobb-Douglas production function \(Y=A K^\alpha\), and \(\delta\ge 0\) is the depreciation rate of physical capital.
Environmental quality is characterized by the pollution stock \(P(t)\).
Equation \eqref{eq:resource_uncontrolled} is the resource constraint, while \eqref{eq:kapital_uncontrolled} and \eqref{eq:pollution_uncontrolled} govern the accumulation of capital and pollution, respectively.
The utility function depends on consumption \(C\) and pollution \(P\).
In the pollution law of motion \eqref{eq:pollution_uncontrolled}, emissions are proportional to production, which is standard.
Much less standard, the function \(f(P(t))\) represents pollution decay; the decay rate is not exponential as in the standard pollution models but depends on the level of pollution.
This is the key specification that generates irreversible pollution.

\paragraph{Pollution.}
The choice of a law of motion for environmental pollution \(P(t)\) is critical for the pollution irreversibility analysis (Toman and Withagen, 2000; Jones and Manuelli, 2001).
Because of our interest in pollution irreversibility, we assume that pollution is accumulated as a stock (Stokey, 1998; Hritonenko and Yatsenko, 2013).
Moreover, in line with Tahvonen and Withagen (1996), we assume that pollution dynamics are described by \eqref{eq:pollution_uncontrolled}, where the emission factor \(\gamma>0\) reflects the environmental dirtiness of the economy, and \(f(P)\) measures the natural pollution absorption capacity of the environment.
The latter is key in analysing the pollution irreversibility problem.
Following Tahvonen and Withagen (1996), we set the following key assumption.
\begin{assumption} \label{assum_f}
Assume that \(f \in C^2[0,\infty)\) is such that
\begin{equation*}
f(0)=0, \quad f(P)>0 \text{ and } f''(P)<0 \text{ for } P \in (0, \bar P), \quad f(P)=0 \; \text{ for } \; P \ge \bar P, 
\end{equation*}
for some \(\bar P > 0\).
\end{assumption}
By Assumption \ref{assum_f}, \(f(P)\) has a finite maximum over \((0,\bar P)\):
\[
\bar f := \max_{P \ge 0} f(P) > 0.
\]
For clarity, we assume that the maximum is reached at a single point \(P_{\max} \in (0, \bar P)\), see Figure~1 just below.

\begin{figure}[t!]
\begin{center}
\vspace{-1cm}
\includegraphics[width=0.7 \textwidth]{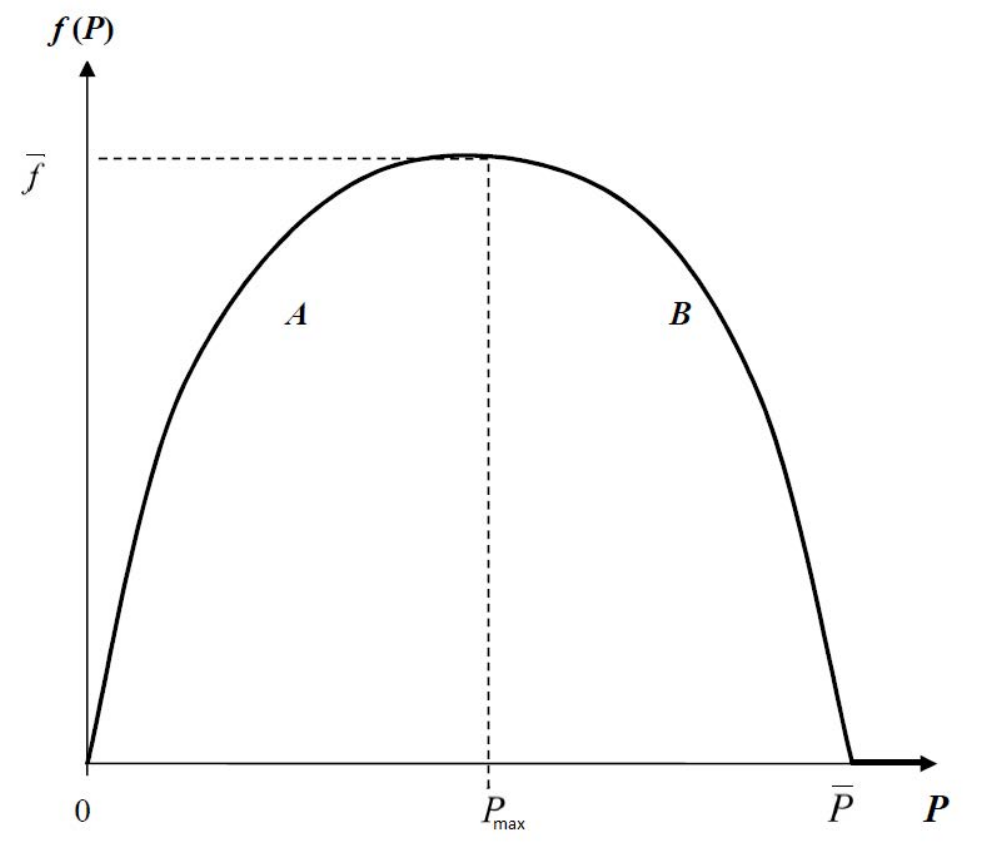}
\vspace{-1.3cm}
\end{center}
\medskip
\caption{The shape of the decay function, f(P)\\ \vspace{.1cm}}
\label{fig:functionf}
\end{figure}

\begin{remark}
The condition \(\bar f > \gamma\) is necessary for the environmental sustainability of the economy \eqref{eq:obj_uncontrolled}--\eqref{eq:pollution_uncontrolled}.
It requires the maximal value \(\bar f\) of \(f(P)\) to be larger than the emission factor \(\gamma\).
Otherwise, the pollution equation \eqref{eq:pollution_uncontrolled} leads to \(P'(t) \ge \gamma - \bar f > 0\) for all \(t\), i.e. pollution \(P(t)\) increases at any economic activity and eventually reaches the pollution irreversibility threshold.
\end{remark}

\paragraph{Utility.}
In line with the economic-environmental literature (Gradus and Smulders, 1993; Stokey, 1998; Byrne, 1997; Economides and Philippopoulos, 2008; Chen et al., 2009; Bréchet et al., 2013; Hritonenko and Yatsenko, 2013, among others), we consider the following additively separable utility function:
\begin{equation}
U(C,P) = \ln C - \eta \frac{P^{1+\mu}}{1+\mu},
\label{eq:utility}
\end{equation}
where the parameter \(\eta>0\) is the environmental vulnerability factor, while \(\mu\) reflects the marginal disutility of pollution, increasing at \(\mu>0\) and decreasing at \(\mu<0\).
The model \eqref{eq:obj_uncontrolled}--\eqref{eq:utility} incorporates key ingredients of the pollution irreversibility problem within a Ramsey setup and without pollution control.

\newpage

\subsection{Preliminary characterization of optimal paths}

The problem contains three equality constraints \eqref{eq:resource_uncontrolled}--\eqref{eq:pollution_uncontrolled}, which leaves us with one decision variable \(I\) and three state variables \(K\), \(C\), and \(P\).
The present-value Hamiltonian for the problem \eqref{eq:obj_uncontrolled}--\eqref{eq:pollution_uncontrolled} is
\begin{equation}
H = \ln C - \eta \frac{P^{1+\mu}}{1+\mu}
+ \lambda_1 (A K^\alpha - I - C)
+ \lambda_2 (I - \delta K)
+ \lambda_3 (\gamma A K^\alpha - f(P)),
\label{eq:hamiltonian_uncontrolled}
\end{equation}
where the dual variables \(\lambda_1, \lambda_2, \lambda_3\) are associated with equalities \eqref{eq:resource_uncontrolled}--\eqref{eq:pollution_uncontrolled}, and \(\mu_1\) is related to the constraint \(I \ge 0\).
The first-order extremum condition for the decision variable \(I\) is
\begin{equation}
-\lambda_1 + \lambda_2 = 0,
\label{eq:foc_I_uncontrolled}
\end{equation}
or \(\lambda_1 = \lambda_2\) in the case of \(I>0\).
The first-order conditions for the state variables \(K\) and \(P\) are, respectively:
\begin{align}
\dot \lambda_1 &= \rho \lambda_1 - \lambda_1 \alpha A K^{\alpha-1} - \lambda_3 \gamma \alpha A K^{\alpha-1}, \label{eq:costate_K_uncontrolled} \\
\dot \lambda_3 &= \rho \lambda_3 + \eta P^\mu + \lambda_3 f'(P), \label{eq:costate_P_uncontrolled}
\end{align}
and the transversality conditions take the form \(\lim_{t\to\infty} e^{-\rho t} \lambda_i(t) K(t) = 0\).
Straightforward substitutions in the system \eqref{eq:costate_K_uncontrolled}--\eqref{eq:costate_P_uncontrolled}, combined with \eqref{eq:resource_uncontrolled}--\eqref{eq:pollution_uncontrolled}, allow us to write the following nonlinear ODE system for interior optimal trajectories \(K(t), C(t), P(t)\), \(0 \le t < \infty\):
\begin{subequations}
\begin{align}
\dot K &= A K^\alpha - C - \delta K, \label{eq:ode_K_uncontrolled} \\
\dot C &= C \left( \alpha A K^{\alpha-1} - \rho - \delta \right), \label{eq:ode_C_uncontrolled} \\
\dot P &= \gamma A K^\alpha - f(P). \label{eq:ode_P_uncontrolled}
\end{align}
\end{subequations}

We start the analysis of the dynamics by looking at the possible stationary equilibria, if any:
\[
K(t) = \bar K, \quad C(t) = \bar C, \quad P(t) = \bar P,
\]
where \(\bar K, \bar C, \bar P\) are positive constants.

The explicit formulas for stationary solutions of \eqref{eq:ode_K_uncontrolled}--\eqref{eq:ode_P_uncontrolled} are
\begin{align}
\bar K &= \left( \frac{\alpha A}{\rho + \delta} \right)^{\frac{1}{1-\alpha}}, \label{eq:barK_uncontrolled} \\
\bar C &= A \bar K^\alpha - \delta \bar K, \label{eq:barC_uncontrolled} \\
f(\bar P) &= \gamma A \bar K^\alpha, \label{eq:barP_uncontrolled}
\end{align}
where the function \(f^{-1}(z)\), \(0 \le z \le \bar f\), is the inverse of \(z = f(P)\), \(0 \le P < \infty\).

This nonlinear system can have multiple solutions under Assumption \ref{assum_f} on \(f(P)\).
Indeed, the function \(z=f(P)\), \(0 \le P \le \bar P\), has two inverse functions:
\begin{align}
f_1^{-1}(z) &:= \text{the inverse on } (0, P_{\max}), \label{eq:inv1} \\
f_2^{-1}(z) &:= \text{the inverse on } (P_{\max}, \bar P), \label{eq:inv2}
\end{align}
with \(0 \le z \le \bar f\).
These inverse functions correspond to the arcs \(A\) and \(B\) of the curve \(z=f(P)\) from Figure~1. See Figure~2 below.
\begin{figure}[t!]
\begin{center}
\vspace{-1cm}
\includegraphics[width=0.7 \textwidth]{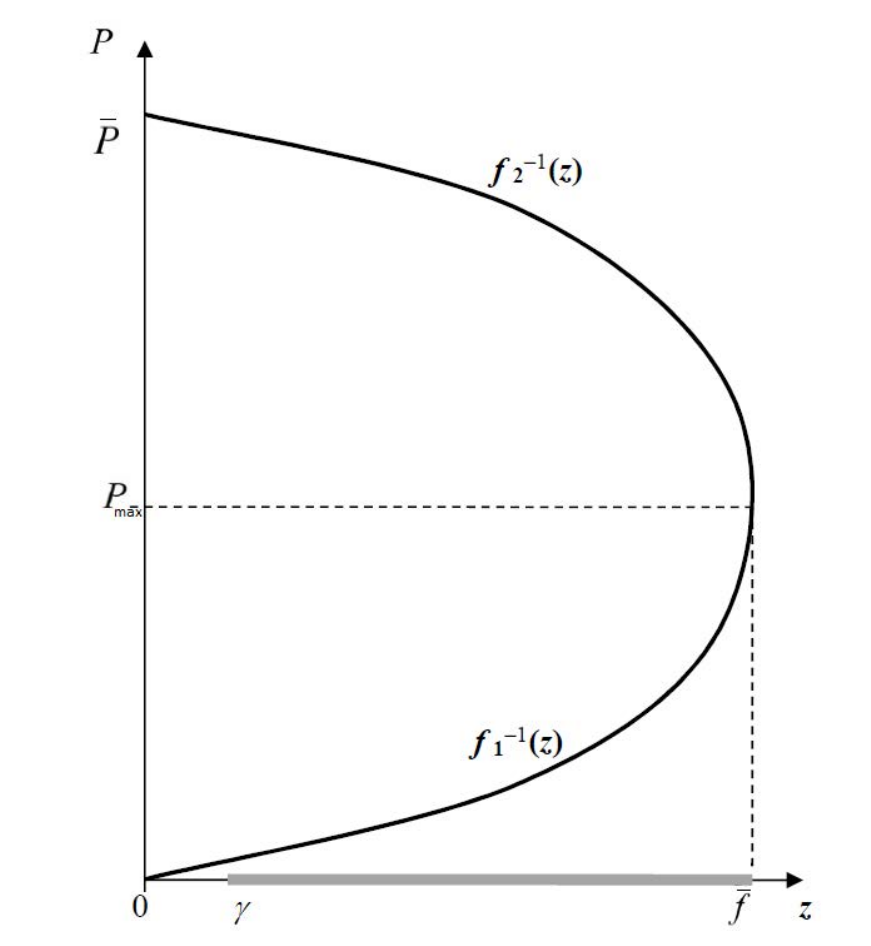}
\vspace{-1.3cm}
\end{center}
\medskip
\caption{The shape of the inverse of the decay function, f(P)\\ \vspace{.1cm}}
\label{fig:invfunctionf}
\end{figure}

If \(\gamma A \bar K^\alpha < \bar f\), then there are two values \(0 < \bar P_1 < P_{\max} < \bar P_2 < \bar P\) that satisfy equation \eqref{eq:barP_uncontrolled}.
Accordingly, the system \eqref{eq:barK_uncontrolled}--\eqref{eq:barP_uncontrolled} has two solutions \((\bar K, \bar C, \bar P_1)\) and \((\bar K, \bar C, \bar P_2)\), which represent two steady states of the problem.
Their convergence/stability properties are investigated in a standard way using the indirect Lyapunov method (see Appendix \ref{appa}).
The specific structure of the state equations \eqref{eq:resource_uncontrolled}--\eqref{eq:pollution_uncontrolled} allows us to obtain a clear stability result, which we summarise in the following theorem.

\begin{theorem}
\label{thm:uncontrolled_steady}
The optimization problem \eqref{eq:obj_uncontrolled}--\eqref{eq:pollution_uncontrolled} has at most two steady states.
Let \(f(P_0) < \bar f\), where \(\bar P = \sup\{P : f(P)>0\}\).
Then, three cases are possible:
\begin{enumerate}
\item If \(\gamma A \bar K^\alpha < \bar f\), then the problem has one asymptotically stable steady state \((\bar K, \bar C, \bar P_1)\) with \(\bar P_1 < P_{\max}\), and one unstable steady state \((\bar K, \bar C, \bar P_2)\) with \(\bar P_2 > P_{\max}\).
\item If \(\gamma A \bar K^\alpha = \bar f\), there is one unstable steady state \((\bar K, \bar C, \bar P_{\max})\).
\item If \(\gamma A \bar K^\alpha > \bar f\), then the problem has no steady states.
\end{enumerate}
\end{theorem}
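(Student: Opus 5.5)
The plan is to exploit the triangular structure of the system \eqref{eq:ode_K_uncontrolled}--\eqref{eq:ode_P_uncontrolled}. The pair \((K,C)\) does not depend on \(P\), so steady states are found in two stages. First, set \(\dot C=0\) with \(C>0\); this forces \(\alpha A K^{\alpha-1}=\rho+\delta\), which has the unique solution \eqref{eq:barK_uncontrolled}. Then \(\dot K=0\) gives \eqref{eq:barC_uncontrolled}. Finally, \(\dot P=0\) reduces to the scalar equation \(f(P)=\gamma A\bar K^\alpha\) for \(P\) in the region where \(f\) is positive.

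Assumption~\ref{assum_f} makes \(f\) strictly concave on \((0,\bar P)\) with a unique maximizer \(P_{\max}\). Hence \(f\) is strictly increasing on \([0,P_{\max}]\) and strictly decreasing on \([P_{\max},\bar P]\), taking every value in \((0,\bar f)\) exactly once on each arc. The counting then splits by the level \(z=\gamma A\bar K^\alpha>0\):
\begin{itemize}
\item if \(z<\bar f\), there are exactly two roots \(\bar P_1=f_1^{-1}(z)<P_{\max}<\bar P_2=f_2^{-1}(z)\);
\item if \(z=\bar f\), there is exactly one root, \(P_{\max}\);
\item if \(z>\bar f\), there is no root, since \(f\le\bar f\) everywhere.
\end{itemize}
Points \(P\ge\bar P\) cannot be roots because \(f=0<z\) there. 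This gives the bound of ``at most two'' steady states and the existence part of cases 1--3.

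For stability, I linearize at \((\bar K,\bar C,\bar P_i)\). The Jacobian is block lower-triangular: the \((K,C)\) block is the classical Ramsey Jacobian, and the \(P\)-row is \((\gamma\alpha A\bar K^{\alpha-1},\,0,\,-f'(\bar P_i))\). Its eigenvalues are therefore the two Ramsey eigenvalues together with \(-f'(\bar P_i)\). The Ramsey block has trace \(\alpha A\bar K^{\alpha-1}-\delta=\rho>0\) and determinant \(\bar C\,\alpha(\alpha-1)A\bar K^{\alpha-2}<0\), so it has one negative and one positive real eigenvalue. The optimal \((K,C)\) path is the stable arm, selected by the transversality condition.
\begin{itemize}
\item At \(\bar P_1\) we have \(f'(\bar P_1)>0\), so \(-f'<0\). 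The steady state then has a two-dimensional stable manifold, which is transversal to the \(C\)-direction, and \(P\) converges along the stable arm. This is the sense in which it is asymptotically stable for the problem: saddle-path stable, with the restriction to optimal paths attracting.
\item At \(\bar P_2\) we have \(f'(\bar P_2)<0\), so the \(P\)-eigenvalue is positive. Along the optimal arm, where \(K\to\bar K\), the pollution equation is asymptotically autonomous with a repelling equilibrium, so only a single initial \(P_0\) converges. Hence the steady state is unstable.
\end{itemize}

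The delicate step is case 2 at \(P_{\max}\), where \(f'(P_{\max})=0\) and linearization is inconclusive. There I would argue directly on the one-dimensional reduced equation \(\dot P=\bar f-f(P)+o(1)\), whose right-hand side is nonnegative to leading order, with \(\bar f-f(P)\approx -\tfrac12 f''(P_{\max})(P-P_{\max})^2\). This makes \(P_{\max}\) semistable: it attracts from below and repels from above. It is therefore unstable in the Lyapunov sense.

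A further subtlety is controlling the \(o(1)\) term \(\gamma A(K^\alpha-\bar K^\alpha)\), which decays exponentially along the stable arm. I would handle it with a comparison argument, showing that perturbations above \(P_{\max}\), and those above \(\bar P_2\) in case 1, are pushed away. The hypothesis \(f(P_0)<\bar f\) only ensures that the initial point lies where the analysis is nondegenerate, and it is used solely in this classification.
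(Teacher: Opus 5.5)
Your proposal follows the paper's argument: the steady states come from the decoupled Ramsey block together with the two inverse branches $f_1^{-1},f_2^{-1}$, and stability is read off the block-triangular Jacobian, whose eigenvalues are the Ramsey saddle pair and $-f'(\bar P_i)$. You go further than the paper's appendix, which never treats case 2, where the eigenvalue $-f'(P_{\max})=0$ makes linearization inconclusive; your semistability argument fills that gap, though the comparison step should use perturbations with $K_0\ge\bar K$, since for $K_0<\bar K$ the decaying negative term $\gamma A(K^\alpha-\bar K^\alpha)$ can pull paths that start just above $P_{\max}$ (or $\bar P_2$) back down, and those perturbations already suffice for Lyapunov instability.
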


Two reversible steady states are possible in the problem \eqref{eq:obj_uncontrolled}--\eqref{eq:pollution_uncontrolled} with an inverted U-shaped \(f(P)\), but only one, \((\bar K, \bar C, \bar P_1)\), is asymptotically stable and, so, possesses an economic value.
The capital \(\bar K\) and consumption \(\bar C\) are the same in both equilibrium points, while the pollution is smaller, \(\bar P_1 < \bar P_2\), in the stable point.
We shall develop these technical points in more detail through an explicit exercise (with a fully specified \(f(P)\)) below.

At this point, note that there exists a threshold for environmental reversibility (that is, for the absorption capacity to remain positive) in the Ramsey economy with uncontrolled pollution, given by the inequality
\begin{equation}
\gamma A \bar K^\alpha > \bar f,
\qquad \text{where } \bar K = \left( \frac{\alpha A}{\rho+\delta} \right)^{\frac{1}{1-\alpha}}.
\label{eq:irrev_threshold_uncontrolled}
\end{equation}

If \(\bar f\) is large enough so that \eqref{eq:irrev_threshold_uncontrolled} does \emph{not} hold, then the economy is likely to avoid falling into environmental irreversibility, and pollution will converge to a finite value in the long term.
In contrast, irreversibility is granted if the pollution absorption capacity is bounded to be low and \eqref{eq:irrev_threshold_uncontrolled} holds.
Condition \eqref{eq:irrev_threshold_uncontrolled} tells us much more about the likelihood of irreversible pollution.
First, the larger \(A\) or \(\gamma\), the more difficult it is to avoid irreversible pollution (as this requires, by \eqref{eq:irrev_threshold_uncontrolled}, absorption capacities to be larger).
This is perfectly natural, as larger \(A\) or \(\gamma\) raises the level of pollution, indirectly via the stock of capital for \(A\), and directly for \(\gamma\).
A similar outcome emerges for the capital share \(\alpha\), for the same reason.
As the discount rate \(\delta\) or the capital depreciation rate \(\rho\) drops, the probability of avoiding irreversibility increases (as the right-hand side of \eqref{eq:irrev_threshold_uncontrolled} diminishes).
Here again, the mechanism operates through capital accumulation: a larger discount or lower capital depreciation rate leads to slower capital accumulation, thereby putting a brake on production and pollution.

This should not come as a surprise: in the model without pollution control, a dichotomy operates between the pure Ramsey part of the model, namely the capital accumulation equation \eqref{eq:ode_K_uncontrolled} and the Keynes-Ramsey rule \eqref{eq:ode_C_uncontrolled}, both independent of pollution, and the pollution outcome, which is computed via the pollution law of motion \eqref{eq:ode_P_uncontrolled} for a given capital stock.
This has implications for the steady state (as reflected in equations \eqref{eq:barK_uncontrolled}--\eqref{eq:barP_uncontrolled}), for the comparative statics shown above, and also for the dynamics and stability properties as shown in the subsection below.

Also note that the irreversibility criterion \eqref{eq:irrev_threshold_uncontrolled} depends only on the maximal value \(\bar f\) of the pollution absorption capacity \(f(P)\), but the resulting stationary pollution levels \(\bar P_1\) and \(\bar P_2\) obviously depend on the shape of \(f(P)\).
To analyse this dependence in more detail, we consider a quadratic \(f(P)\) in the next subsection.
In their explicit calculations, Tahvonen and Withagen (1996) consider linear functions \(f(P)\), while the setting adopted in El Ouardighi et al. (2014) is fully linear-quadratic.
The exercise performed below is therefore totally original; it highlights readily and precisely some new properties sketched briefly above.
It will in particular allow us to neatly discuss the optimality of reversible versus irreversible pollution paths.

\subsection{Optimal reversible vs. irreversible pollution paths}

We shall specify \(f(P)\) as follows.

\begin{corollary}[Case of quadratic pollution absorption capacity]
\label{cor:quadratic_uncontrolled}
Let
\begin{equation}
f(P) = a P - b P^2, \quad \text{with } a>0,\ b>0.
\label{eq:f_quadratic}
\end{equation}
Then the steady-state pollution levels are
\begin{align}
\bar P_1 &= \frac{a - \sqrt{a^2 - 4b \gamma A \bar K^\alpha}}{2b}, \label{eq:P1_quadratic} \\
\bar P_2 &= \frac{a + \sqrt{a^2 - 4b \gamma A \bar K^\alpha}}{2b}. \label{eq:P2_quadratic}
\end{align}
Specifically, if \(\gamma A \bar K^\alpha < \bar f = a^2/(4b)\), then the stable steady-state pollution is \(\bar P_1\), given by \eqref{eq:P1_quadratic}.
\end{corollary}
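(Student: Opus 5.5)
The plan is to specialise the steady-state equation \eqref{eq:barP_uncontrolled} to the quadratic absorption \eqref{eq:f_quadratic} and then read stability off Theorem~\ref{thm:uncontrolled_steady}.

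First, I will check that \eqref{eq:f_quadratic} fits Assumption~\ref{assum_f} on the relevant range. We have $f(0)=0$, $f''=-2b<0$, and $f>0$ on $(0,a/b)$. The threshold beyond which absorption is zero is then $\bar P=a/b$; strictly speaking, the paper's convention is $f\equiv 0$ for $P\ge a/b$ after truncation, but the steady states live in $(0,a/b)$. The maximum is attained where $f'(P)=a-2bP=0$, so $P_{\max}=a/(2b)$ and $\bar f=f(P_{\max})=a^2/(4b)$.

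Next, writing $E:=\gamma A\bar K^\alpha$ with $\bar K$ given by \eqref{eq:barK_uncontrolled}, the steady-state condition $f(\bar P)=E$ becomes the quadratic
\[
bP^2-aP+E=0 .
\]
Its roots are
\[
P=\frac{a\pm\sqrt{a^2-4bE}}{2b}.
\]
These are real and distinct exactly when $E<a^2/(4b)=\bar f$, which gives \eqref{eq:P1_quadratic}--\eqref{eq:P2_quadratic}. Since $E>0$, we have $\sqrt{a^2-4bE}<a$. Hence $0<\bar P_1<a/(2b)=P_{\max}<\bar P_2<a/b$. So $\bar P_1$ and $\bar P_2$ lie on arcs $A$ and $B$ respectively, that is, $\bar P_1=f_1^{-1}(E)$ and $\bar P_2=f_2^{-1}(E)$.

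Finally, stability follows from case~1 of Theorem~\ref{thm:uncontrolled_steady}. To make it self-contained, I would note that because of the Ramsey dichotomy the Jacobian of \eqref{eq:ode_K_uncontrolled}--\eqref{eq:ode_P_uncontrolled} is block triangular. The $(K,C)$ block gives the usual saddle. The pollution entry is $-f'(\bar P)=2b\bar P-a$, which is $-\sqrt{a^2-4bE}<0$ at $\bar P_1$ and $+\sqrt{a^2-4bE}>0$ at $\bar P_2$. So $\bar P_1$ is the stable one (along the saddle path), while $\bar P_2$ adds an unstable direction. The only mild point of care is matching ``asymptotically stable'' with the paper's meaning, namely stability along the optimal saddle path. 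This follows directly from the triangular structure, so I expect no real obstacle.
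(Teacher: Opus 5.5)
Your proposal is correct and follows the paper's own (implicit) argument: solve $f(\bar P)=\gamma A\bar K^\alpha$ for the quadratic, then read stability off case 1 of Theorem~\ref{thm:uncontrolled_steady}, whose proof in Appendix~A uses exactly the block-triangular Jacobian with pollution eigenvalue $-f'(\bar P)$. Your explicit computation $-f'(\bar P_{1,2})=\mp\sqrt{a^2-4b\gamma A\bar K^\alpha}$ and the check $0<\bar P_1<P_{\max}<\bar P_2<a/b$ make that sign pattern concrete.
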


By Corollary~\ref{cor:quadratic_uncontrolled}, the asymptotically stable steady-state pollution \(\bar P_1\) in Theorem~\ref{thm:uncontrolled_steady} depends on the maximal value \(\bar f\) of the absorption capacity \(f(P)\) and on the pollution level \(P_{\max} = a/(2b)\) at which it occurs.
Specifically, \(\bar P_1\) is linearly related to \(P_{\max}\) and \(\bar f\).
Therefore, the stable sustainable pollution level \(\bar P_1\) is lower when \(P_{\max}\) is smaller (that is, when the shape of the quadratic \(f(P)\) is steeper).
In the case of a sufficiently large \(\bar f\), the steady-state pollution \(\bar P_1\) is approximately proportional to the steady-state capital \(\bar K\).

As alluded to in the analysis above, the formulas \eqref{eq:barK_uncontrolled} and \eqref{eq:barC_uncontrolled} for the steady-state levels coincide with formulas (1.27) in Barro and Sala-i-Martin (2003).
The pair \((\bar K, \bar C)\) describes the so-called \emph{golden path} in the classic Ramsey model \eqref{eq:obj_uncontrolled}--\eqref{eq:kapital_uncontrolled} with \(P \equiv 0\) and does not depend on the pollution characteristics \(\gamma\) and \(f(P)\).
This is natural, since the economy in the model \eqref{eq:obj_uncontrolled}--\eqref{eq:pollution_uncontrolled} spends nothing on environmental cleanup and simply accepts the given level of accumulated pollution.

Given this neat dichotomy, the dynamic analysis of the model derives to a large extent from that of the underlying Ramsey sub-model, completed by a straightforward (local) analysis of the pollution law of motion \eqref{eq:ode_P_uncontrolled}.
A detailed qualitative analysis of optimal trajectories \(K(t)\) and \(C(t)\) in the classic Ramsey model without pollution was provided in Barro and Sala-i-Martin (2003, Chapter~2).
Extending their results to our model, we obtain the following picture of transition dynamics for the optimization problem \eqref{eq:obj_uncontrolled}--\eqref{eq:pollution_uncontrolled}.

\begin{theorem}
\label{thm:uncontrolled_dynamics}
Let \(\rho > \gamma (1+\mu)\).
Then there exists a unique optimal trajectory \((K^*, C^*, P^*)\) such that:
\begin{itemize}
\item the optimal capital \(K^*(t)\) and consumption \(C^*(t)\) monotonically converge to the steady-state levels \(\bar K\) and \(\bar C\);
\item the optimal pollution \(P^*(t)\) converges to \(\bar P_1\) when \(P(0) \le \bar P_1\)  and \eqref{eq:irrev_threshold_uncontrolled} holds, and \(P^*(t) \to \infty\) as \(t \to \infty\) when \(P(0) > \bar P_1\) or \eqref{eq:irrev_threshold_uncontrolled} fails.
\end{itemize}
\end{theorem}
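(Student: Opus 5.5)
The plan is to exploit the dichotomy already noted after Theorem~\ref{thm:uncontrolled_steady}. The subsystem \eqref{eq:ode_K_uncontrolled}--\eqref{eq:ode_C_uncontrolled} does not involve \(P\), so the pair \((K^*,C^*)\) can be fixed first. After that, \(P^*\) is obtained by integrating the scalar non-autonomous equation \eqref{eq:ode_P_uncontrolled} driven by the known input \(K^*(t)\). Throughout, the hypothesis ``\eqref{eq:irrev_threshold_uncontrolled} holds'' in the convergence clause will be read in the regime where \(\bar P_1\) is defined, i.e. case~1 of Theorem~\ref{thm:uncontrolled_steady} (\(\gamma A\bar K^\alpha<\bar f\)), since otherwise \(\bar P_1\) does not exist.

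\emph{Step 1 (capital and consumption).} Linearize \eqref{eq:ode_K_uncontrolled}--\eqref{eq:ode_C_uncontrolled} at \((\bar K,\bar C)\) from \eqref{eq:barK_uncontrolled}--\eqref{eq:barC_uncontrolled}. The Jacobian has determinant \(\bar C\,\alpha(\alpha-1)A\bar K^{\alpha-2}<0\), so the steady state is a saddle. I will then invoke the classical Ramsey phase-diagram argument (Barro and Sala-i-Martin, 2003, Ch.~2), which gives a unique, monotone, one-dimensional stable arm through \((K_0,C(0))\). Any other choice of \(C(0)\) either drives \(K\) to zero in finite time or violates the transversality condition. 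This yields existence, uniqueness and monotone convergence of \(K^*,C^*\) to \(\bar K,\bar C\).

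\emph{Step 2 (boundedness of the objective).} Since \(f\ge 0\) and \(K^*\) is bounded by \(\max(K_0,\bar K)\), we have \(P^*(t)\le P_0+\gamma A\max(K_0,\bar K)^\alpha t\), so \(P^*\) grows at most linearly. Hence \(e^{-\rho t}P^{*}(t)^{1+\mu}\) is integrable, and the welfare integral is finite even along divergent pollution paths. This makes such paths admissible candidates for optimality. The role of the assumption \(\rho>\gamma(1+\mu)\) is to make this bound uniform: it dominates the growth of the pollution penalty along \(P\) by the discount factor. I will use it to guarantee that \(\lambda_3\), obtained by integrating \eqref{eq:costate_P_uncontrolled} forward, is finite and satisfies the transversality condition.

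\emph{Step 3 (pollution).} I write \(E(t)=\gamma A K^*(t)^\alpha\), which tends monotonically to \(\bar E=\gamma A\bar K^\alpha\), and compare \(\dot P=E(t)-f(P)\) with the frozen autonomous equation \(\dot P=\bar E-f(P)\).

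\begin{itemize}
\item If \(\bar E>\bar f\), there exists \(T\) such that \(\dot P\ge E(t)-\bar f\ge \varepsilon>0\) for \(t\ge T\). Hence \(P^*\to\infty\), and once \(P\ge \bar P\) we have \(f=0\), so the growth is linear.
\item If \(\bar E<\bar f\) and \(P(0)\le\bar P_1\), the point \(\bar P_1\) is attracting for the frozen equation because \(f'(\bar P_1)>0\). A comparison/asymptotic-autonomy argument (Markus's theorem for asymptotically autonomous scalar ODEs) then gives \(P^*\to\bar P_1\). Starting below \(\bar P_1\), the sub- and supersolutions stay in the basin \((0,\bar P_2)\).
\item The remaining clause, \(P(0)>\bar P_1\Rightarrow P^*\to\infty\), is the main obstacle. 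For the frozen equation, initial values in \((\bar P_1,\bar P_2)\) are attracted back to \(\bar P_1\). Divergence therefore has to come from the transient of \(E(t)\): I would try to show that when \(K_0<\bar K\) is not too far below \(\bar K\), \(E(t)\) stays close to \(\bar E\) while \(P\) is in the region where \(f'<0\), so \(P\) is pushed beyond \(\bar P_2\), after which \(\dot P\ge \bar E-f(P)>0\) is self-reinforcing. For \(P(0)\ge\bar P_2\) the divergence follows directly by comparison, since \(f'<0\) there. For \(P(0)\in(\bar P_1,\bar P_2)\), I expect the argument to require identifying the separatrix of the non-autonomous flow and showing the hypothesis places \(P(0)\) above it. This is the delicate point of the proof.
\end{itemize}
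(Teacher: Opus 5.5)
\textbf{The approach matches the paper; the third bullet has a genuine gap.} Your Steps 1--2 and the first bullet of Step 3 follow the paper's route. The paper gives no formal proof. It appeals to the dichotomy, to Barro and Sala-i-Martin for $(K,C)$, to the local eigenvalues of Appendix~\ref{appa}, and to the remark that $P$ grows only linearly. You were also right to read the ``\eqref{eq:irrev_threshold_uncontrolled} holds/fails'' conditions as swapped.

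The third bullet of Step 3 is not merely delicate. The clause ``$P(0)>\bar P_1\Rightarrow P^*(t)\to\infty$'' is false, so no separatrix argument can deliver it.
\begin{itemize}
\item Counterexample: take $K_0=\bar K$. Then $K^*\equiv\bar K$ and $E\equiv\bar E$. For $P_0\in(\bar P_1,\bar P_2)$ one has $\dot P=\bar E-f(P)<0$, so $P^*$ decreases to $\bar P_1$.
\item More generally, if $K_0\le\bar K$ then $E(t)\le\bar E$. Hence $P^*(t)\le\tilde P(t)$, where $\tilde P$ solves the frozen equation from $P_0$, and $\tilde P\to\bar P_1$ for every $P_0<\bar P_2$.
\item Your proposed mechanism ($K_0$ slightly below $\bar K$ pushing $P$ past $\bar P_2$) therefore has the comparison backwards. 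Sub-steady-state capital means emissions below $\bar E$, which can only pull pollution down.
\item For the same reason, your claim that divergence from $P_0\ge\bar P_2$ ``follows directly by comparison'' holds only when $K_0\ge\bar K$ (excluding the steady state itself). If $K_0<\bar K$ and $P_0$ is just above $\bar P_2$, then $\dot P(0)=E(0)-f(P_0)<0$. The path drops below $\bar P_2$ and then converges to $\bar P_1$.
\end{itemize}
The correct dividing line is the repelling root $\bar P_2$ when $K_0=\bar K$. In general it is a separatrix $P_s(K_0)$, nonincreasing in $K_0$, because $K^*(t)$, hence $E(t)$ and $P^*(t)$, are increasing in $K_0$. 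This separatrix lies above $\bar P_2$ for $K_0<\bar K$ and below it for $K_0>\bar K$. You should have flagged $\bar P_1$ as an error, as you did for \eqref{eq:irrev_threshold_uncontrolled}, and proved the corrected dichotomy instead.

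\textbf{The convergence bullet has the mirror-image hole.} Your claim that ``the sub- and supersolutions stay in $(0,\bar P_2)$'' is true for $K_0\le\bar K$, since then $P^*\le\tilde P\le\bar P_1$. For $K_0>\bar K$, the only supersolution at hand is driven by $E(0)=\gamma AK_0^\alpha$, which can exceed $\bar f$. On the stable arm $C^*$ is at most linear in $K^*$, so $K^*$ decays at most exponentially. Hence $E(t)\ge 2\bar f$ on an interval of length of order $\ln K_0$, and on that interval $\dot P\ge\bar f$. So for large $K_0$, $P^*$ crosses $\bar P_2$ even from $P_0\le\bar P_1$. After that, $\dot P\ge\bar E-f(P)>0$ and $P^*\to\infty$. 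The convergence clause therefore needs a restriction such as $K_0\le\bar K$, or more precisely $P_0<P_s(K_0)$.

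\textbf{Two smaller points.}
\begin{itemize}
\item The hypothesis $\rho>\gamma(1+\mu)$ does not play the role you assign it. Integrability of $e^{-\rho t}t^{1+\mu}$ and the forward integral defining $\lambda_3$ both work for every $\rho>0$; the paper's own footnote says as much, and the hypothesis is never used.
\item Your Step~1 decoupling, like the paper's, rests on \eqref{eq:ode_C_uncontrolled}. That equation silently drops the term $-\lambda_3\gamma\alpha AK^{\alpha-1}$ of \eqref{eq:costate_K_uncontrolled}. Since $\lambda_3<0$ whenever $\eta>0$, a planner who internalises emissions faces a Keynes--Ramsey rule coupled to pollution. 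The dichotomy is thus a property of system (7) as written, not of the stated optimisation problem.
\end{itemize}
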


Theorem~\ref{thm:uncontrolled_dynamics} shows that the optimal capital \(K(t)\) and consumption \(C(t)\) in the third case of Theorem~\ref{thm:uncontrolled_steady} still approach the stationary values \((\bar K, \bar C)\), while the pollution \(P(t)\) increases indefinitely.
Two points are worth making at this stage.
First, Theorem~\ref{thm:uncontrolled_dynamics} is a natural implication of the dichotomy property already highlighted above.
The capital and consumption dynamics (and steady states) are driven by the Ramsey sub-model, while pollution dynamics derive from the law of motion \eqref{eq:ode_P_uncontrolled} for a given (optimal) capital path.
By \eqref{eq:ode_P_uncontrolled}, the evolution of the pollution stock depends on the difference between emissions, \(\gamma A K^\alpha\), which depend entirely on the capital dynamics derived in the Ramsey sub-model, and the amount of pollution absorbed, \(f(P)\).
If \eqref{eq:irrev_threshold_uncontrolled} holds, the former term dominates, and pollution grows without bound.
Otherwise, the latter term dominates, and pollution converges to a steady state.

Second, and more interestingly, when the irreversibility threshold is crossed, i.e. when \eqref{eq:irrev_threshold_uncontrolled} holds, pollution grows only linearly in the long run: this comes directly from the pollution law of motion \eqref{eq:ode_P_uncontrolled} as \(K\) converges to \(\bar K\) and \(f(P)\) is bounded by Assumption \ref{assum_f}.
Henceforth, the objective functional \eqref{eq:obj_uncontrolled} remains bounded along the explosive paths.\footnote{More precisely, when (13) holds, $P(t)$ grows linearly since $K(t) \to \bar{K}$ and $f(P)$ is bounded by Assumption 1. The disutility integrand thus behaves as $e^{-\rho t} P(t)^{1+\mu} \sim e^{-\rho t} t^{1+\mu}$, which is integrable for any $\rho > 0$. The transversality condition $\lim_{t\to\infty} e^{-\rho t} \lambda_3(t)P(t) = 0$ is also satisfied: since $\lambda_3(t)$ grows at most polynomially (driven by the $\eta P^\mu$ forcing in the costate equation), the product $e^{-\rho t}\lambda_3(t)P(t)$ decays exponentially. This confirms that irreversible paths are not only feasible but genuinely optimal in the sense of the Pontryagin conditions.}
This means that when \eqref{eq:irrev_threshold_uncontrolled} holds, the irreversible pollution regime sets in and is \emph{optimal}.

\section{Optimal Growth with Pollution Abatement}
\label{sec:abatement}

\subsection{The Controlled Problem}
\label{subsec:controlled_problem}

We now extend the model of Section \ref{sec:baseline} by introducing a pollution 
abatement policy. The representative agent chooses consumption $C(t)$, investment 
$I(t)$, and an abatement effort $\theta(t) \in [0,1]$ to maximise the discounted 
utility stream:
\begin{equation}
\max_{C,\theta} \int_0^\infty e^{-\rho t} \left[ \ln C - \eta 
\frac{P^{1+\mu}}{1+\mu} \right] dt
\label{eq:obj_controlled}
\end{equation}
subject to the capital accumulation equation:
\begin{equation}
\dot{K} = (1-\theta)AK^\alpha - C - \delta K, \qquad K(0) = K_0 > 0,
\label{eq:kapital_controlled}
\end{equation}
the pollution dynamics:
\begin{equation}
\dot{P} = \gamma(1-\theta)AK^\alpha - f(P), \qquad P(0) = P_0 > 0,
\label{eq:pollution_controlled}
\end{equation}
and the non-negativity constraints $C(t) \geq 0$, $\theta(t) \in [0,1]$.

The abatement effort $\theta(t)$ represents the fraction of gross output $AK^\alpha$ 
diverted from productive use toward pollution control. A fraction $\theta$ of output 
is sacrificed, reducing both the resources available for consumption and investment 
and the flow of emissions proportionally:
\begin{itemize}
\item net output available for consumption and investment is $(1-\theta)AK^\alpha$;
\item gross emissions are reduced from $\gamma AK^\alpha$ to 
$\gamma(1-\theta)AK^\alpha$.
\end{itemize}
There are several other modelling choices for the emission function $E\left(Y(t), \theta(t)\right)$, increasing with output and decreasing with an abatement indicator. Smulders and Gradus (1996) considered indeed a more complex specification for the emission flows as $Y^{\pi} \; \theta^{-\lambda}$, with $\lambda>\pi$ while Br\'echet et al. (2013) specialize in the 0-homogenous case $\lambda=\pi=1$. Needless to say, the ``correct” specification depends on the type of pollutant and abatement technology used. Our assumption involves less algebraic complexity, and will be instrumental in deriving clear-cut results for the local stability properties required. It is deliberately parsimonious: a single control $\theta$ governs 
both the output cost and the emission reduction, ensuring clean analytical results 
while preserving the economic intuition of a genuine trade-off between growth and 
environmental quality.

\subsection{Optimality Conditions}
\label{subsec:optimality}

We now display the the necessary conditions for an interior optimum.\footnote{The necessary conditions derived in this section are not complemented by sufficient conditions for a global optimum. The non-concavity of the absorption function $f(P)$, which is the structural feature generating pollution irreversibility, implies that the current-value Hamiltonian (20) is not jointly concave in the state and control variables. Standard sufficiency theorems (Mangasarian or Arrow-Kurz) therefore do not apply globally. On the ascending branch of $f(P)$, where $f''(P) < 0$ locally, a restricted sufficiency argument can be made in the neighbourhood of the stable steady state; the saddlepoint structure established in Theorem 4 and the uniqueness of the convergent trajectory on the stable manifold together provide strong evidence that the identified interior path is indeed optimal within its basin of attraction. A global sufficiency result would require either a convexification of the problem or a verification approach via the Hamilton-Jacobi-Bellman equation, both of which are beyond the scope of the present analysis.} Of course, because the utility function is logarithmic in consumption, the corner solution $\theta=1$ cannot be optimal. However, $\theta=0$ can be optimal, it corresponds to zero abatement, which takes us back to the benchmark model studied in Section~2. That is precisely why we focus on the interior optima if any.

The current-value Hamiltonian associated with problem 
(\ref{eq:obj_controlled})--(\ref{eq:pollution_controlled}) is:
\begin{equation}
\mathcal{H} = \ln C - \eta \frac{P^{1+\mu}}{1+\mu} 
+ \lambda_1 \left[(1-\theta)AK^\alpha - C - \delta K\right] 
+ \lambda_3 \left[\gamma(1-\theta)AK^\alpha - f(P)\right]
\label{eq:hamiltonian_controlled}
\end{equation}
where $\lambda_1(t)$ and $\lambda_3(t)$ are the costate variables associated with 
capital $K$ and pollution $P$ respectively. We develop below the necessary optimality conditions, the transversality conditions for $K$ and $P$ are standard as in the benchmark.\footnote{As in the benchmark case, the boundedness of the objective function along the irreversible pollution paths can be directly checked.}

\medskip
\noindent\textit{Optimality with respect to $C$.} The first-order condition 
$\partial\mathcal{H}/\partial C = 0$ gives:
\begin{equation}
\frac{1}{C} = \lambda_1.
\label{eq:foc_C_controlled}
\end{equation}
\noindent\textit{Optimality with respect to $\theta$.} The first-order condition 
$\partial\mathcal{H}/\partial\theta = 0$ gives:
\begin{equation}
-\lambda_1 AK^\alpha - \lambda_3 \gamma AK^\alpha = 0.
\label{eq:foc_theta}
\end{equation}
Since $AK^\alpha > 0$, condition (\ref{eq:foc_theta}) simplifies to:
\begin{equation}
\lambda_1 + \gamma\lambda_3 = 0 
\qquad \Longrightarrow \qquad 
\lambda_3 = -\frac{\lambda_1}{\gamma}.
\label{eq:lambda_relation}
\end{equation}
This is a static optimality condition: at every instant, the marginal cost of 
abatement in terms of foregone output (captured by $\lambda_1$) must equal its 
marginal benefit in terms of reduced pollution damage (captured by $\gamma\lambda_3$). 
Equation (\ref{eq:lambda_relation}) links the two costate variables at all times 
along the optimal path.

\medskip
\noindent\textit{Costate equations.} The costate equation for capital is:
\begin{equation}
\dot{\lambda}_1 = \rho\lambda_1 - \frac{\partial\mathcal{H}}{\partial K} 
= \lambda_1\left[\rho + \delta - (1-\theta)\alpha AK^{\alpha-1}\right].
\label{eq:costate_lambda1_simplified}
\end{equation}
The costate equation for pollution is:
\begin{equation}
\dot{\lambda}_3 = \rho\lambda_3 - \frac{\partial\mathcal{H}}{\partial P} 
= \rho\lambda_3 + \eta P^\mu + \lambda_3 f'(P).
\label{eq:costate_lambda3}
\end{equation}
Differentiating (\ref{eq:foc_C_controlled}) with respect to time yields 
$\dot{\lambda}_1/\lambda_1 = -\dot{C}/C$. Substituting into 
(\ref{eq:costate_lambda1_simplified}) gives the \textit{Keynes-Ramsey rule}:
\begin{equation}
\frac{\dot{C}}{C} = (1-\theta)\alpha AK^{\alpha-1} - \rho - \delta.
\label{eq:KR_controlled}
\end{equation}
Comparing with the uncontrolled Keynes-Ramsey rule of Section \ref{sec:baseline}, 
the key modification is the appearance of $(1-\theta)$ in the net return to capital. 
Abatement reduces the effective return on capital by diverting a fraction $\theta$ of 
output away from productive accumulation, thereby slowing consumption growth relative 
to the uncontrolled economy.

\medskip
\noindent Differentiating (\ref{eq:lambda_relation}) with respect to time and 
substituting (\ref{eq:costate_lambda1_simplified}) and (\ref{eq:costate_lambda3}) 
yields a \textit{pollution pricing rule} that must hold at all times along the 
optimal path:
\begin{equation}
\eta P^\mu = \frac{\lambda_1}{\gamma}
\left[\rho + \delta + f'(P) - (1-\theta)\alpha AK^{\alpha-1}\right].
\label{eq:pricing_rule}
\end{equation}
Condition (\ref{eq:pricing_rule}) equates the marginal disutility of pollution 
$\eta P^\mu$ to the shadow cost of emissions, adjusted for the net return on capital 
and the absorption capacity of the environment. The implicit price of pollution along 
the optimal path is thus pinned down by the interplay between preferences, technology, 
and environmental dynamics.

\subsection{Steady-State Analysis}
\label{subsec:steady_state_controlled}

At a steady state $(\bar{K},\bar{C},\bar{P},\bar{\theta})$, all time derivatives 
vanish. From $\dot{C}/C = 0$ in (\ref{eq:KR_controlled}):
\begin{equation}
(1-\bar{\theta})\,\alpha A\bar{K}^{\alpha-1} = \rho + \delta.
\label{eq:ss_KR_controlled}
\end{equation}
From $\dot{K} = 0$ in (\ref{eq:kapital_controlled}):
\begin{equation}
(1-\bar{\theta})\,A\bar{K}^\alpha = \bar{C} + \delta\bar{K}.
\label{eq:ss_K_controlled}
\end{equation}
From $\dot{P} = 0$ in (\ref{eq:pollution_controlled}):
\begin{equation}
\gamma(1-\bar{\theta})\,A\bar{K}^\alpha = f(\bar{P}).
\label{eq:ss_P_controlled}
\end{equation}
Condition (\ref{eq:ss_KR_controlled}) implicitly defines the steady-state capital 
stock as a function of $\bar{\theta}$:
\begin{equation}
\bar{K}(\bar{\theta}) = \left(\frac{\alpha A(1-\bar{\theta})}{\rho+\delta}
\right)^{\frac{1}{1-\alpha}}.
\label{eq:barK_controlled}
\end{equation}
Comparing with the uncontrolled steady-state capital $\bar{K} = 
\left(\alpha A/(\rho+\delta)\right)^{1/(1-\alpha)}$ from Section \ref{sec:baseline}:
\begin{equation}
\bar{K}(\bar{\theta}) = (1-\bar{\theta})^{\frac{1}{1-\alpha}}\,\bar{K} < \bar{K}
\qquad \text{for all } \bar{\theta} > 0.
\label{eq:barK_comparison}
\end{equation}
Abatement unambiguously reduces the steady-state capital stock. The fraction of 
output devoted to pollution control lowers the net return on capital, inducing the 
economy to accumulate less capital in the long run.

\medskip
\noindent\textbf{Breaking the dichotomy.} In the uncontrolled model of Section 
\ref{sec:baseline}, the steady-state pair $(\bar{K},\bar{C})$ was determined 
entirely by the Ramsey sub-system, independently of pollution characteristics. This 
dichotomy breaks down in the controlled model. Using $\bar{\lambda}_1 = 1/\bar{C}$ 
and (\ref{eq:ss_KR_controlled}), the pollution pricing rule (\ref{eq:pricing_rule}) 
at steady state reduces to:
\begin{equation}
\eta\bar{P}^\mu = \frac{1}{\gamma\bar{C}}\left[\delta + f'(\bar{P})\right].
\label{eq:ss_pricing}
\end{equation}

Condition (\ref{eq:ss_pricing}) links $\bar{P}$, $\bar{C}$, and the shape of $f$ 
simultaneously. The pollution level, the consumption level, and the abatement rate 
are now jointly determined: the economy can no longer treat pollution as a residual 
outcome of an independently determined capital path. 

Incidentally, equation (\ref{eq:ss_pricing}) also determines $\bar{\theta}$, as both $\bar{P}$ and $\bar{C}$ are functions of $\bar{\theta}$ by (\ref{eq:ss_KR_controlled})-(\ref{eq:ss_P_controlled}). One can take a step further and explores the existence of a steady state solution at least on the ascending branch of $f(P)$. Define $\Phi(P) \equiv \eta P^\mu - \bar{C}[\rho + f'(P)]$ on $(0, P_{\max})$.
At $P \to 0^+$: if $\mu > 0$, then $\eta P^\mu \to 0$ while $\bar{C}[\rho + f'(0)]$ may be positive (since $f'(0) > 0$ on the ascending branch), so $\Phi(0^+) < 0$.
On ther other hand, at $P \to P_{\max}^-$: $f'(P_{\max}) = 0$, so $\bar{C}[\rho + f'(P)] \to \bar{C}\rho > 0$, while $\eta P_{\max}^\mu > 0$, hence the sign of $\Phi(P_{\max}^-)$ depends on whether $\eta P_{\max}^\mu \gtrless \bar{C}\rho$.
If $\mu < 0$: $\eta P^\mu \to +\infty$ as $P \to 0^+$, so $\Phi(0^+) > 0$, and by continuity a crossing exists if $\Phi(P_{\max}^-) < 0$. One can therefore state the following existence proposition.

\begin{proposition} Suppose Assumption 1 holds and $f'(P) > 0$ on $(0, P_{\max})$. Then:
(i) If $\mu < 0$: a solution $P^ \in (0, P_{\max})$ to (33) exists provided $\eta P_{\max}^\mu < \bar{C}\rho$.
(ii) If $\mu > 0$: a solution $P^ \in (0, P_{\max})$ to (33) exists provided $\eta P_{\max}^\mu > \bar{C}\rho$.
\end{proposition}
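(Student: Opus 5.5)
The plan is a direct intermediate value argument applied to the function $\Phi(P) = \eta P^\mu - \bar{C}[\rho + f'(P)]$ on $(0, P_{\max})$. As in the discussion preceding the statement, I read (33) as the equation $\Phi(P)=0$, with $\bar{C}>0$ held fixed as a given positive parameter. I will not try to resolve the dependence of $\bar{C}$ on $\bar{\theta}$ through (\ref{eq:ss_KR_controlled})--(\ref{eq:ss_P_controlled}). The argument has four steps.

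\textbf{Step 1: regularity and right endpoint.} By Assumption~\ref{assum_f}, $f\in C^2[0,\infty)$, so $f'$ is continuous on $[0,P_{\max}]$. The map $P\mapsto \eta P^\mu$ is continuous on $(0,\infty)$ for any real $\mu$. Hence $\Phi$ is continuous on $(0,P_{\max}]$. Since $P_{\max}$ is an interior maximiser of the $C^1$ function $f$, we have $f'(P_{\max})=0$. Therefore $\Phi$ extends continuously to $P_{\max}$ with
\[
\Phi(P_{\max}) = \eta P_{\max}^\mu - \bar{C}\rho .
\]

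\textbf{Step 2: left endpoint.} By continuity of $f'$ on $[0,P_{\max}]$, the quantity $f'(0)=\lim_{P\to0^+}f'(P)$ is finite. Because $f'>0$ on $(0,P_{\max})$, it also satisfies $f'(0)\ge 0$. Consequently $\bar{C}[\rho+f'(P)] \to \bar{C}[\rho+f'(0)]$, a finite number which is at least $\bar{C}\rho>0$. This splits into the two cases of the statement.

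\textbf{Step 3: case (i), $\mu<0$.} Here $\eta P^\mu\to+\infty$ as $P\to0^+$, so $\Phi(P)\to+\infty$. In particular, $\Phi(\varepsilon)>0$ for some small $\varepsilon>0$. The hypothesis $\eta P_{\max}^\mu<\bar{C}\rho$ gives $\Phi(P_{\max})<0$. The intermediate value theorem on $[\varepsilon,P_{\max}]$ then yields a zero $P^*$. This zero lies in $(\varepsilon,P_{\max})\subset(0,P_{\max})$, strictly inside because the endpoint values are nonzero.

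\textbf{Step 4: case (ii), $\mu>0$.} Here $\eta P^\mu\to0$, so $\Phi(P)\to-\bar{C}[\rho+f'(0)]\le-\bar{C}\rho<0$. Hence $\Phi(\varepsilon)<0$ for small $\varepsilon$. The hypothesis $\eta P_{\max}^\mu>\bar{C}\rho$ gives $\Phi(P_{\max})>0$, and the intermediate value theorem again produces $P^*\in(0,P_{\max})$. The case $\mu=0$ is not covered by the statement and needs no treatment.

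I expect no step to be a genuine obstacle. The only delicate points are bookkeeping: the limits at $0^+$ must be justified using $f\in C^2$ up to the boundary, so that $f'(0)$ is finite and nonnegative, and $\bar{C}>0$ must be used so that the sign of $\bar{C}\rho$ is fixed. I would also remark that the proposition treats $\bar{C}$ as fixed. A full steady state would additionally require a fixed point linking $\bar{P}$, $\bar{C}(\bar{\theta})$ and $\bar{\theta}$ through (\ref{eq:ss_P_controlled}), which lies outside what the proposition asserts.
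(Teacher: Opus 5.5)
Your proof is correct and is essentially the paper's own argument: the paper also applies the intermediate value theorem to $\Phi(P)=\eta P^\mu-\bar C[\rho+f'(P)]$, and so it too identifies (33) with $\Phi=0$ rather than with the displayed form $\eta\bar P^\mu=[\delta+f'(\bar P)]/(\gamma\bar C)$, using the same endpoint signs $\Phi(0^+)=+\infty$ for $\mu<0$, $\Phi(0^+)=-\bar C[\rho+f'(0)]<0$ for $\mu>0$, and $\Phi(P_{\max}^-)=\eta P_{\max}^\mu-\bar C\rho$. What you add is only the bookkeeping the paper leaves implicit: $f'(0)$ is finite and nonnegative because $f\in C^2[0,\infty)$, $\Phi$ extends continuously to $P_{\max}$ because $f'(P_{\max})=0$, and the choice of $\varepsilon$ lets you apply the theorem on a compact interval.
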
 

The proof is easy, it simply uses the Intermediate Value Theorem (IVT). $\Phi$ is continuous on $(0, P_{\max})$ by $f \in C^2$. Under (i): $\Phi(0^+) = +\infty > 0$ and $\Phi(P_{\max}^-) = \eta P_{\max}^\mu - \bar{C}\rho < 0$ by hypothesis, leading to the existence of a zero by the IVT. Under (ii): $\Phi(0^+) = 0 - \bar{C}[\rho + f'(0)] < 0$ and $\Phi(P_{\max}^-) = \eta P_{\max}^\mu - \bar{C}\rho > 0$ by hypothesis, which gives again a zero by the IVT. We can establish the same kind of sufficient condition for a solution to exist after $P_{max}$, along the descending branch of $f(P)$. In all cases, similar to the benchmark case, only the steady on the ascending branch is saddlepoint-stable as proved below.

\paragraph{The Preeminence of the Ecological Condition}

The analysis of steady states and their stability reveals a fundamental asymmetry between the conditions governing the sustainability of optimal paths. Condition (33), namely $f'(\bar{P}) + \rho > 0$, occupies a qualitatively distinct position in the hierarchy of constraints. To see why, recall that the existence of a saddlepoint equilibrium — the canonical configuration for an optimally controlled economy converging to a well-defined long-run state — requires the Jacobian of the dynamical system to exhibit eigenvalues of opposite sign. As shown in the Appendix, this is equivalent to requiring that the determinant of the Jacobian be strictly negative, which reduces precisely to condition (33). When this condition fails, the Jacobian determinant is non-negative, both eigenvalues share the same sign, and no saddlepoint can exist. The optimal path, in this case, does not converge to any interior steady state: the economy is structurally locked out of sustainability, regardless of the level of abatement or the stringency of environmental policy.

This observation deserves emphasis. The feasibility thresholds derived in the preceding analysis, namely the critical abatement rate $\theta$, the capital conditions on $\bar{K}$, the bounds on emission intensity, are all endogenous constraints. They depend on preferences, technology, and policy instruments, and they can in principle be relaxed through appropriate intervention. By contrast, condition (33) is structural: it is determined by the biophysical properties of the ecosystem, summarised in the curvature of the absorption function $f(P)$ at the candidate steady state, and by the social rate of time preference $\rho$. No reallocation of output toward abatement, however ambitious, can alter this condition directly. What abatement can do is shift the steady-state pollution level $\bar{P}$ to a lower value — ideally back onto the ascending branch of $f(P)$, where $f'(\bar{P}) > 0$ and condition (33) is satisfied. The feasibility threshold $\theta$ is precisely the minimum abatement effort required to achieve this. But the logical priority is clear: the ecological condition defines the domain within which sustainability is conceivable; the feasibility conditions determine whether the economy can afford to operate within that domain.

The economic interpretation of condition (33) is illuminating. On the ascending branch of $f(P)$, the ecosystem's absorption capacity is increasing in pollution: a marginal increase in $P$ raises the flow of natural remediation, providing a self-correcting force. On the descending branch, this corrective capacity is eroding: higher pollution weakens the ecosystem further, and the absorption function declines. When $|f'(\bar{P})| > \rho$, the rate at which the ecosystem deteriorates at the margin exceeds the social discount rate. The planner, even endowed with perfect foresight and unconstrained in the choice of abatement, cannot respond quickly enough to prevent the ecological dynamics from overwhelming the optimal trajectory. Irreversibility, in this reading, is not a failure of policy design or economic incentives: it is a failure of the ecosystem itself to provide the regenerative capacity that sustainability requires. It is in this precise sense that the ecological condition is preeminent.

\subsection{The Modified Irreversibility Threshold}
\label{subsec:irrev_controlled}

The central question of this section is whether optimal abatement can prevent 
irreversible pollution accumulation. Recall from Section \ref{sec:baseline} that 
irreversibility arises in the uncontrolled model whenever:
\begin{equation}
\gamma A\bar{K}^\alpha > \bar{f},
\label{eq:irrev_uncontrolled}
\end{equation}
where $\bar{f} = \max_P f(P)$ denotes the maximum absorption capacity of the 
environment. In the controlled model, the steady-state pollution condition 
(\ref{eq:ss_P_controlled}) requires:
\begin{equation}
\gamma(1-\bar{\theta})\,A\bar{K}(\bar{\theta})^\alpha \leq \bar{f}.
\label{eq:irrev_threshold_controlled}
\end{equation}
Substituting (\ref{eq:barK_controlled}) into (\ref{eq:irrev_threshold_controlled}):
\begin{equation}
\gamma A\bar{K}^\alpha\,(1-\bar{\theta})^{\frac{1}{1-\alpha}} \leq \bar{f}.
\label{eq:irrev_threshold_explicit}
\end{equation}
Since $(1-\bar{\theta})^{1/(1-\alpha)} < 1$ for any $\bar{\theta} > 0$, condition 
(\ref{eq:irrev_threshold_explicit}) is strictly easier to satisfy than 
(\ref{eq:irrev_uncontrolled}). Abatement strictly relaxes the irreversibility 
threshold.

\medskip
\noindent The minimum abatement effort required to avoid irreversibility, the 
\textit{critical abatement rate}, is obtained by solving 
(\ref{eq:irrev_threshold_explicit}) with equality:
\begin{equation}
\theta^* = 1 - \left(\frac{\bar{f}}{\gamma A\bar{K}^\alpha}\right)^{1-\alpha}.
\label{eq:theta_critical}
\end{equation}
Note that $\theta^* > 0$ if and only if $\gamma A\bar{K}^\alpha > \bar{f}$, i.e.\ 
precisely when the uncontrolled economy faces irreversibility. The following theorem 
summarises the structure of steady states in the controlled model.

\begin{theorem}
\label{thm:controlled_steady}
Suppose $f$ is hump-shaped with maximum $\bar{f}$ attained at $P_{\max}$, and let 
$P_0 < P_{\max}$. Then:
\begin{enumerate}
\item[\emph{(i)}] If $\gamma(1-\bar{\theta})\,A\bar{K}(\bar{\theta})^\alpha < 
\bar{f}$, the controlled problem has two steady states: an asymptotically stable 
equilibrium $(\bar{K}(\bar{\theta}),\bar{C}(\bar{\theta}),\bar{P}_1(\bar{\theta}))$ 
with $\bar{P}_1 < P_{\max}$, and an unstable equilibrium with $\bar{P}_2 > P_{\max}$.
\item[\emph{(ii)}] If $\gamma(1-\bar{\theta})\,A\bar{K}(\bar{\theta})^\alpha = 
\bar{f}$, there is a unique, unstable steady state at $P = P_{\max}$.
\item[\emph{(iii)}] If $\gamma(1-\bar{\theta})\,A\bar{K}(\bar{\theta})^\alpha > 
\bar{f}$, no steady state exists and pollution grows without bound.
\end{enumerate}
\end{theorem}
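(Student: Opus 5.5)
The plan is to treat $\bar\theta$ as fixed and reduce the controlled steady-state problem to the benchmark structure of Theorem~\ref{thm:uncontrolled_steady}, with $\gamma A\bar K^\alpha$ replaced by the effective emission level $E(\bar\theta):=\gamma(1-\bar\theta)A\bar K(\bar\theta)^\alpha=\gamma A\bar K^\alpha(1-\bar\theta)^{1/(1-\alpha)}$. Given $\bar\theta$, conditions (\ref{eq:ss_KR_controlled})--(\ref{eq:ss_K_controlled}) determine $\bar K(\bar\theta)$ through (\ref{eq:barK_controlled}) and $\bar C(\bar\theta)=(1-\bar\theta)A\bar K(\bar\theta)^\alpha-\delta\bar K(\bar\theta)$. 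Condition (\ref{eq:ss_P_controlled}) is then the scalar equation $f(\bar P)=E(\bar\theta)$. Under Assumption~\ref{assum_f}, $f$ is strictly increasing on $(0,P_{\max})$ and strictly decreasing on $(P_{\max},\bar P)$. The counting follows from the inverse branches (\ref{eq:inv1})--(\ref{eq:inv2}). If $E<\bar f$, there are exactly two roots, $\bar P_1=f_1^{-1}(E)<P_{\max}<\bar P_2=f_2^{-1}(E)$. If $E=\bar f$, the only root is $P_{\max}$. If $E>\bar f$, there is no root. This settles the existence part of (i)--(iii), exactly as in the benchmark.

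For (iii) I would add a growth argument. Along any path with abatement near $\bar\theta$, $\dot P\ge E-\bar f-\varepsilon>0$ eventually. So $P$ grows at least linearly and, because $f$ vanishes beyond $\bar P$, it never returns. This is the argument of the remark following Assumption~\ref{assum_f}, with $\gamma$ replaced by $E$.

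For stability in (i) and (ii), I would linearise the reduced system, namely (\ref{eq:kapital_controlled}), (\ref{eq:pollution_controlled}) and (\ref{eq:KR_controlled}), with $\theta$ pinned down by the pricing rule (\ref{eq:pricing_rule}), around each candidate. The key simplification is the one already used for Theorem~\ref{thm:uncontrolled_steady}. The pollution block enters through $-f'(\bar P)$, and the associated pricing/costate equation (\ref{eq:costate_lambda3}) contributes $\rho+f'(\bar P)$. The determinant of the relevant $2\times 2$ pollution--shadow-price block is therefore proportional to $-f'(\bar P)\bigl(\rho+f'(\bar P)\bigr)$ plus terms with the sign of $-\eta\mu\bar P^{\mu-1}$-type contributions. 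I would show the following. On the ascending branch, $f'(\bar P_1)>0$, so the determinant is negative and the block has real eigenvalues of opposite signs. Combined with the standard Ramsey saddle in $(K,C)$, this gives a saddle whose stable manifold is the asymptotically stable (optimal) path. On the descending branch, where $f'(\bar P_2)<0$, the pollution eigenvalue $-f'(\bar P_2)>0$ is repelling, so the point is unstable: a perturbation with $P>\bar P_2$ drifts into the region where $f$ falls and $P\to\infty$. At $P_{\max}$, $f'(P_{\max})=0$ gives a zero eigenvalue. There I would use the second-order term $\dot P\approx -\tfrac12 f''(P_{\max})(P-P_{\max})^2>0$ to show one-sided divergence for $P>P_{\max}$, hence instability.

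The main obstacle is the coupling created by the control. Unlike the benchmark, $\theta$ depends on $(K,C,P)$ through (\ref{eq:pricing_rule}), so the Jacobian is not block-triangular and the Ramsey block no longer decouples. I would first try to eliminate $\theta$ using (\ref{eq:lambda_relation}) and write the system in $(K,C,P)$. I would then compute the characteristic polynomial and use the sign of the determinant and trace, as in the Appendix argument behind condition (33), to get exactly one or two negative eigenvalues on the ascending branch. Failing a clean computation, I would fall back on a perturbation argument: treat the feedback of $\theta$ on $K$ as small near the steady state and invoke continuity of eigenvalues from the decoupled benchmark case. The hypothesis $P_0<P_{\max}$ is used only to place the initial state in the basin of $\bar P_1$.
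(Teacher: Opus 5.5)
Your root count for $f(\bar P)=\gamma(1-\bar\theta)A\bar K(\bar\theta)^\alpha$ via the two inverse branches, and your linear-growth argument for (iii), are what the paper means by ``the same steps as Theorem~\ref{thm:uncontrolled_steady}''. The gap is in the stability part, where you analyse a different dynamical system from the one whose rest points you just counted.

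The theorem holds $\theta\equiv\bar\theta$ fixed, and so does the paper's argument, made explicit in the block-triangular Jacobian \eqref{app:Jacobian_block}. That is why $\bar P_{1,2}(\bar\theta)$ are rest points of \eqref{app:Kdot}--\eqref{app:Pdot}. If you instead pin $\theta$ down by the pricing rule \eqref{eq:pricing_rule}, every rest point must also satisfy \eqref{eq:ss_pricing}. That is the extra equation which determines $\bar\theta$, and for a given $\bar\theta$ it cannot be expected to hold at both $\bar P_1(\bar\theta)$ and $\bar P_2(\bar\theta)$. So ``linearise around each candidate'' is not well posed.

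Your eigenvalue bookkeeping is also inconsistent. Once $\lambda_1=1/C$ and $\lambda_1=-\gamma\lambda_3$, the shadow price $\lambda_3=-1/(\gamma C)$ is not an independent coordinate. A $2\times 2$ pollution--shadow-price block therefore cannot be combined with a separate Ramsey saddle in $(K,C)$, since that counts $C$ twice. At $\bar P_2$ you then revert to the frozen-$\theta$ eigenvalue $-f'(\bar P_2)$, which is not in general an eigenvalue of the coupled Jacobian.

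The perturbation fallback fails as well, because there is no small parameter. The pricing rule solves exactly for $(1-\theta)\alpha AK^{\alpha-1}$, so substituting it into \eqref{eq:KR_controlled} removes $K$ from the $\dot C$ equation altogether. The entry $J_{21}$ that produces the Ramsey saddle becomes zero, and continuity of eigenvalues from the decoupled case tells you nothing.

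The repair is to keep $\theta$ frozen throughout, as the paper does. Then \eqref{app:Kdot}--\eqref{app:Pdot} is the benchmark system with $A$ replaced by $(1-\bar\theta)A$. The Jacobian is block lower-triangular, with eigenvalues $\xi_1=-f'(\bar P)$ and the two roots of the Ramsey quadratic, whose product \eqref{app:product} is negative. This gives three cases:
\begin{enumerate}
\item[] At $\bar P_1$ you get two negative roots and one positive, i.e.\ a saddle with a two-dimensional stable manifold. That is the sense of ``asymptotically stable'' here, as in Theorem~\ref{thm:uncontrolled_steady}.
\item[] At $\bar P_2$ you get two positive roots, hence instability.
\item[] At $P_{\max}$ your second-order argument $\dot P\approx-\tfrac12 f''(P_{\max})(P-P_{\max})^2>0$ handles the zero eigenvalue. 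It is in fact more careful than the paper, which merely asserts instability there.
\end{enumerate}
Only the sign of $f'(\bar P)$ matters; the factor $\rho+f'(\bar P)$ plays no role in this theorem.
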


\noindent The proof follows the same steps as Theorem \ref{thm:uncontrolled_steady} in the benchmark Section 2, we omit it. 

\medskip
\noindent Three cases emerge directly from Theorem \ref{thm:controlled_steady}: if 
$\bar{\theta} > \theta^*$, the economy escapes irreversibility and converges to a 
finite steady-state pollution $\bar{P}_1$; if $\bar{\theta} = \theta^*$, the economy 
sits at the boundary; and if $\bar{\theta} < \theta^*$, irreversible pollution 
persists despite abatement. Abatement therefore softens but does not eliminate the 
irreversibility constraint: when emission intensity $\gamma$ or productivity $A$ is 
sufficiently large, no finite abatement rate $\bar{\theta} \in [0,1)$ can prevent 
unbounded pollution growth.

\subsection{Saddlepoint Stability and the Role of Abatement}
\label{app:stability_controlled}

The full dynamic system consists of four equations governing $(\dot{K}, \dot{P}, \dot{\lambda_1}, \dot{\lambda_3})$. However, under the linear abatement specification, the first-order condition for $\theta$ yields a closed-form relationship between the two costate variables. Specifically, at an interior optimum, equating the marginal cost and marginal benefit of abatement gives equation \ref{eq:lambda_relation}, that is

$$\lambda_1 = - \gamma \lambda_3$$

This condition holds along the entire optimal path, not merely at the steady state. It effectively ties $\lambda_1$ to $\lambda_3$, reducing the dimensionality of the system from four to three. Substituting $\lambda_1 = - \gamma \lambda_3$ throughout, the relevant dynamic system becomes one in $(K, P, \lambda_3)$ only (or equivalently in $(K, P,C)$ as below), and the stability analysis proceeds by evaluating the $3\times 3$ Jacobian of this reduced system at the interior steady state $(K^, P^, \lambda_3)$. This dimension reduction is a direct consequence of the linearity of abatement in the emission equation and would not be available under a general nonlinear abatement specification $g(\theta)$, which is one reason we maintain the linear formulation as our baseline.

The controlled dynamical system then consists of three differential equations 
in $(K, C, P)$. From the optimality conditions of the Hamiltonian , the equations of motion are:

\begin{align}
    \dot{K} &= (1-\bar{\theta})AK^\alpha - C - \delta K, 
    \label{app:Kdot} \\[6pt]
    \dot{C} &= C\left[\alpha(1-\bar{\theta})AK^{\alpha-1} 
               - \delta - \rho\right], 
    \label{app:Cdot} \\[6pt]
    \dot{P} &= \gamma(1-\bar{\theta})AK^\alpha - f(P),
    \label{app:Pdot}
\end{align}

where $\bar{\theta}$ is the optimal abatement rate, treated as 
constant in the neighbourhood of the steady state for the purpose 
of local stability analysis.\footnote{This is without loss of generality: at the steady state, $\bar{\theta}$ is determined by the algebraic optimality condition (33), jointly with $(\bar{K}, \bar{C}, \bar{P})$. Since $\theta$ is a control variable with no independent law of motion, its local deviations from $\bar{\theta}$ are of second order in the deviations of the state-costate vector and do not affect the first-order linearisation. Equivalently, the dimension reduction $\lambda = \gamma\mu$  already incorporates the optimal adjustment of $\theta$ along the path; fixing $\theta = \bar{\theta}$ in the linearised system is therefore consistent with the full optimality conditions.} The Blanchard-Kahn condition for a unique convergent path in a 
three-dimensional system requires exactly one unstable eigenvalue, 
yielding a \emph{two-dimensional} stable manifold. The two predetermined 
variables $(K_0, P_0)$ then uniquely determine the initial values 
$C(0)$ and $\theta(0)$ via the stable manifold, ensuring a unique 
optimal trajectory.

\begin{theorem}
\label{thm:saddlepoint}
Under the conditions of case \emph{(i)} of Theorem 
\ref{thm:controlled_steady}, the steady state 
$(\bar{K}(\bar{\theta}), \bar{C}(\bar{\theta}), \bar{P}_1(\bar{\theta}))$ 
is a saddlepoint with a two-dimensional stable manifold. For any initial 
condition $(K_0, P_0)$ with $K_0 > 0$ and 
$P_0 \in (0, \bar{P}_2(\bar{\theta}))$ in the basin of attraction, 
there exists a unique optimal trajectory $(K^*(t), C^*(t), P^*(t))$ 
converging to the steady state, where the initial values $C(0)$ and 
$\theta(0)$ are determined endogenously by the stable manifold conditions.
\end{theorem}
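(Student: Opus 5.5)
The plan is to work directly with the reduced three-dimensional system (\ref{app:Kdot})--(\ref{app:Pdot}) with $\theta$ frozen at $\bar\theta$, as the footnote preceding the theorem justifies. I will linearise it at $(\bar K(\bar\theta),\bar C(\bar\theta),\bar P_1(\bar\theta))$. The structure is block-triangular: $\dot K$ and $\dot C$ do not depend on $P$, while $\dot P$ depends on $(K,P)$ only. The Jacobian is therefore
\[
J=\begin{pmatrix} \rho & -1 & 0\\ \bar C\,\alpha(\alpha-1)(1-\bar\theta)A\bar K^{\alpha-2} & 0 & 0\\ \gamma\alpha(1-\bar\theta)A\bar K^{\alpha-1} & 0 & -f'(\bar P_1)\end{pmatrix}.
\]
Here the $(1,1)$ entry $(1-\bar\theta)\alpha A\bar K^{\alpha-1}-\delta$ equals $\rho$ by (\ref{eq:ss_KR_controlled}). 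The eigenvalues split into two groups.

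The first group comes from the $2\times 2$ Ramsey block. Its trace is $\rho>0$ and its determinant is $\bar C\alpha(\alpha-1)(1-\bar\theta)A\bar K^{\alpha-2}<0$, because $0<\alpha<1$, $\bar\theta<1$ and $\bar C>0$ by (\ref{eq:ss_K_controlled}). Hence it has one strictly negative and one strictly positive real eigenvalue. The second group is the single eigenvalue $-f'(\bar P_1)$. This is negative because case (i) of Theorem \ref{thm:controlled_steady} places $\bar P_1$ on the ascending branch, so $f'(\bar P_1)>0$ by Assumption \ref{assum_f} and strict concavity with $f'(P_{\max})=0$.

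Together this gives exactly one unstable and two stable eigenvalues. By the stable manifold theorem, the nonlinear system has a locally invariant two-dimensional stable manifold, which is a graph over the predetermined pair $(K,P)$. The graph property holds because the unstable eigenvector is transversal to the $(K,P)$-plane. This follows from the triangular structure: the $C$-component of the unstable Ramsey eigenvector is nonzero. So the Blanchard--Kahn count is satisfied.

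To pass from local to the stated initial conditions, I will again exploit the decoupling. For any $K_0>0$, the $(K,C)$ subsystem is the classical Ramsey phase plane with a modified productivity $(1-\bar\theta)A$. By the argument underlying Theorem \ref{thm:uncontrolled_dynamics}, there is a unique $C(0)$ placing the trajectory on the global saddle path, with $K(t)\to\bar K(\bar\theta)$ monotonically. Along this path the emissions $e(t)=\gamma(1-\bar\theta)AK(t)^\alpha$ converge to $\bar e<\bar f$. The scalar equation $\dot P=e(t)-f(P)$ is then asymptotically autonomous. For the limit equation, $\bar P_1$ attracts every initial point in $(0,\bar P_2)$ and $\bar P_2$ repels. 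A comparison argument with $e(t)$ squeezed between $\bar e\pm\varepsilon$ for large $t$ gives $P(t)\to\bar P_1$, provided $P(t)$ stays below the perturbed upper root. This is exactly the phrase ``in the basin of attraction'' in the statement. Finally, $\theta(0)$ is recovered from the relation (\ref{eq:lambda_relation}) together with the pricing rule (\ref{eq:pricing_rule}) evaluated at $(K_0,C(0),P_0)$, and uniqueness follows from uniqueness of $C(0)$.

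The main obstacle is the justification of freezing $\theta=\bar\theta$ and the matching of $\theta(0)$. Strictly, $\theta$ is pinned down by (\ref{eq:lambda_relation}) and (\ref{eq:pricing_rule}) along the path, and this restricts $(K,C,P)$ to satisfy a constraint, so the ``free'' choice of $C(0)$ must be compatible with it. My first attempt is to treat $\theta$ as determined implicitly by the pricing rule near the steady state via the implicit function theorem. I would then check that the resulting perturbation of $J$ preserves the eigenvalue signs, since they are strict, so the count is robust to small perturbations. Failing that, I would rely on the paper's footnote convention, which states that the first-order linearisation is unaffected.
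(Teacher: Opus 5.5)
Your linearisation is exactly the paper's Appendix~B. It uses the same block-triangular Jacobian with $\theta$ frozen at $\bar\theta$, the eigenvalue $-f'(\bar P_1)<0$, and a Ramsey block with trace $\rho$ and negative determinant. Your global step (the Ramsey saddle path for $(K,C)$, then an asymptotically autonomous scalar equation for $P$ handled by comparison) goes beyond the paper, which simply reads global uniqueness off the local eigenvalue count. Note, though, that both arguments deliver a unique convergent solution of an ODE system, not an \emph{optimal} trajectory. The paper's own footnote concedes that the standard sufficiency theorems do not apply.

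The genuine gap is the one you flag, and your proposed repair does not close it. On any arc with interior $\theta$, \eqref{eq:foc_C_controlled} and \eqref{eq:lambda_relation} give $\lambda_3=-1/(\gamma C)$. Substituting this into \eqref{eq:costate_lambda3} yields, whatever $\theta$ does,
\[
\dot C=C\bigl[\gamma\eta CP^\mu-\rho-f'(P)\bigr].
\]
This involves $P$ but not $K$. A trajectory of the frozen system \eqref{app:Kdot}--\eqref{app:Pdot} satisfies it together with \eqref{app:Cdot} only where $(1-\bar\theta)\alpha AK^{\alpha-1}-\delta=\gamma\eta CP^\mu-f'(P)$. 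This, not the printed \eqref{eq:pricing_rule}, is what \eqref{eq:costate_lambda1_simplified} and \eqref{eq:costate_lambda3} actually combine to. Generic points of your two-dimensional stable manifold violate this scalar restriction, so the frozen saddle path does not solve the necessary conditions. That is also why the $\theta(0)$ you recover at $(K_0,C(0),P_0)$ will generally differ from the $\bar\theta$ used to compute $C(0)$.

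Writing $\theta=\theta(K,C,P)$ by the implicit function theorem is the right idea, but it is not a small perturbation of $J$. The gradient $\nabla\theta$ is nonzero (e.g.\ $\partial_C\theta\neq0$) and enters at first order with $O(1)$ weights: $-A\bar K^\alpha\nabla\theta$ in the $\dot K$ row and $\gamma$ times that in the $\dot P$ row. So the strictness of the frozen eigenvalue signs buys nothing. The new Jacobian has $C$-row $\bigl(0,\ \gamma\eta\bar C\bar P^\mu,\ \bar C[\gamma\eta\mu\bar C\bar P^{\mu-1}-f''(\bar P)]\bigr)$ and is no longer block-triangular. This also removes the $(K,C)$/$P$ decoupling your global step relies on. Appealing to the footnote's claim that deviations of $\theta$ are second order is circular, since that is exactly what fails.

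Finally, \eqref{eq:costate_lambda1_simplified}, which you inherit, drops the term $\lambda_3\gamma(1-\theta)\alpha AK^{\alpha-1}$ of $\partial\mathcal H/\partial K$. Restoring it and using $\lambda_1+\gamma\lambda_3=0$ gives $\dot\lambda_1=(\rho+\delta)\lambda_1$, i.e.\ $\dot C/C=-(\rho+\delta)$ on every interior-$\theta$ arc. So no steady state with $\bar\theta\in(0,1)$ exists, and the condition \eqref{eq:ss_KR_controlled} behind your $J_{11}=\rho$ is not a valid steady-state condition.
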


\begin{proof}
See Appendix \ref{app:stability_controlled}.
\end{proof}

\subsubsection*{The Effect of Abatement on the Basin of Attraction}

The introduction of optimal abatement has two distinct effects on the 
geometry of the stable manifold and the basin of attraction, which 
together constitute the main message of this subsection.

\paragraph{Effect 1: Expansion of the basin of attraction.}
Optimal abatement reduces effective emissions from $\gamma A K^\alpha$ 
to $\gamma(1-\bar{\theta}) A K^\alpha$, lowering the pollution flow 
at every level of capital. The irreversibility threshold \eqref{eq:irrev_uncontrolled} 
is replaced by the controlled condition:
\begin{equation}
    \gamma(1 - \bar{\theta}) A \bar{K}(\bar{\theta})^\alpha > \bar{f},
    \label{eq:irrev_controlled}
\end{equation}
Since $\bar{\theta} > 0$ reduces the left-hand side relative to the 
uncontrolled case, the set of parameter configurations under which 
irreversibility is avoided is strictly larger under optimal abatement. 
Equivalently, for a given parameter configuration, the unstable pollution 
threshold $\bar{P}_2(\bar{\theta})$ shifts outward relative to 
$\bar{P}_{2,0}$, expanding the basin of attraction in the pollution 
dimension.

\paragraph{Effect 2: The ecological condition remains binding.}
Despite the expansion of the basin of attraction, the structural 
condition for saddlepoint stability:
\begin{equation}
    f'(\bar{P}_1(\bar{\theta})) + \rho > 0,
    \label{eq:ecological_controlled}
\end{equation}
remains a \emph{preeminent constraint} that optimal abatement cannot 
override. This condition depends exclusively on the curvature of the 
natural absorption function $f(P)$ at the steady state and on the 
planner's discount rate $\rho$. It is a structural, biophysical 
requirement: if it fails, no saddlepoint steady state exists on the 
descending branch of $f(P)$, regardless of the level of abatement 
chosen. Abatement shifts the location of the steady state 
$\bar{P}_1(\bar{\theta})$ along the ascending branch of $f(P)$, 
but it cannot move the steady state to a region where the ecological 
condition is violated and simultaneously maintain saddlepoint stability.

\smallskip

The key insight from the previous comparison is that abatement is a powerful instrument for 
sustainability since it expands the set of initial conditions from which 
convergence to a clean steady state is possible. However, optimal abatement operates 
within the bounds set by the ecological condition 
\eqref{eq:ecological_controlled}, which is essentially determined by nature, 
not by policy.

\subsection{Abatement and the Irreversibility Constraint: A Quadratic Example}
\label{subsec:quadratic_controlled}

To illustrate the results above concretely, we return to the quadratic specification 
$f(P) = aP - bP^2$ introduced in Corollary \ref{cor:quadratic_uncontrolled}. Under this 
specification, $\bar{f} = a^2/(4b)$ and $P_{\max} = a/(2b)$. The steady-state 
pollution levels in the controlled economy are:
\begin{equation}
\bar{P}_{1,2}(\bar{\theta}) = \frac{a \mp \sqrt{a^2 - 
4b\,\gamma(1-\bar{\theta})\,A\bar{K}(\bar{\theta})^\alpha}}{2b},
\label{eq:P12_controlled}
\end{equation}
where $\bar{P}_1$ takes the minus sign and $\bar{P}_2$ the plus sign. Since 
$\gamma(1-\bar{\theta})\,A\bar{K}(\bar{\theta})^\alpha$ is strictly decreasing in 
$\bar{\theta}$, it follows immediately that:
\begin{itemize}
\item $\bar{P}_1(\bar{\theta}) < \bar{P}_1(0)$: abatement lowers the stable 
steady-state pollution level;
\item $\bar{P}_2(\bar{\theta}) > \bar{P}_2(0)$: abatement raises the unstable 
steady-state pollution level;
\item the two steady states move apart as $\bar{\theta}$ increases, widening the 
basin of attraction of the stable equilibrium.
\end{itemize}

\noindent\textbf{Numerical illustration.} To fix ideas, consider the following 
baseline parameterisation:
\begin{center}
\begin{tabular}{llll}
\hline
Parameter & Value & & Interpretation \\
\hline
$A$ & $1.0$ & & Total factor productivity \\
$\alpha$ & $0.33$ & & Capital share \\
$\rho$ & $0.04$ & & Discount rate \\
$\delta$ & $0.05$ & & Depreciation rate \\
$\gamma$ & $0.8$ & & Emission intensity \\
$a$ & $0.6$ & & Linear absorption coefficient \\
$b$ & $0.05$ & & Quadratic absorption coefficient \\
$\eta$ & $0.1$ & & Pollution disutility weight \\
$\mu$ & $1.0$ & & Pollution disutility curvature \\
\hline
\end{tabular}
\end{center}

\medskip
\noindent Under these parameters, the uncontrolled steady-state capital and the 
maximum absorption capacity are:
\[
\bar{K} \approx 6.84, \qquad \bar{f} = \frac{a^2}{4b} = 1.80.
\]
Since $\gamma A\bar{K}^\alpha \approx 1.52 < \bar{f} = 1.80$, the uncontrolled 
economy avoids irreversibility. Now suppose emission intensity rises to $\gamma = 
1.1$, so that $\gamma A\bar{K}^\alpha \approx 2.09 > \bar{f}$: the uncontrolled 
economy faces irreversible pollution. The critical abatement rate 
(\ref{eq:theta_critical}) is:
\[
\theta^* = 1 - \left(\frac{1.80}{2.09}\right)^{0.67} \approx 0.094.
\]
An abatement effort of approximately $9.4\%$ of output suffices to restore a finite 
steady-state pollution. With $\bar{\theta} = 0.10 > \theta^*$, the controlled 
steady-state capital falls to $\bar{K}(0.10) \approx 5.81$, and the effective 
emission flow satisfies:
\[
\gamma(1-\bar{\theta})\,A\bar{K}(\bar{\theta})^\alpha \approx 1.78 < \bar{f} = 1.80,
\]
placing the economy just inside the reversibility region. The stable steady-state 
pollution is $\bar{P}_1(0.10) \approx 5.37$, substantially higher than in the 
baseline but finite. This illustrates a general feature of the model: as the economy 
approaches the irreversibility boundary from below, the stable pollution level rises 
sharply, so that the cost of near-irreversibility is a high but finite long-run 
pollution stock.

\subsection{Summing up}
\label{subsec:discussion_controlled}

The introduction of linear abatement into the Ramsey growth model with pollution 
produces four main results in addition to the preeminent ecological condition (which robustness we discuss below).

\medskip
\noindent\textbf{Breaking the dichotomy.} Unlike the uncontrolled model, capital 
accumulation and pollution dynamics are no longer separable. The steady-state capital, 
consumption, and pollution are jointly determined by the full optimality system, as 
reflected in the pollution pricing rule (\ref{eq:ss_pricing}).

\medskip
\noindent\textbf{Relaxing the irreversibility threshold.} The condition for 
irreversible pollution becomes $\gamma(1-\bar{\theta})\,A\bar{K}(\bar{\theta})^\alpha 
> \bar{f}$, which is strictly harder to satisfy than the uncontrolled threshold 
(\ref{eq:irrev_uncontrolled}). Optimal abatement shifts the economy toward 
reversibility, and the critical abatement rate $\theta^*$ provides a precise measure 
of the minimum policy effort required.

\medskip
\noindent\textbf{Saddlepoint stability is preserved.} The controlled steady state 
retains the saddlepoint structure of the uncontrolled model, with a two-dimensional 
stable manifold ensuring a unique optimal trajectory.

\medskip
\noindent\textbf{Abatement does not eliminate irreversibility.} When emission 
intensity $\gamma$ or productivity $A$ is sufficiently large, no finite abatement 
rate can prevent unbounded pollution growth. The irreversibility constraint is 
softened but not removed: the boundary $\theta^* \to 1$ as $\gamma A\bar{K}^\alpha 
\to \infty$, meaning that an economy facing very high emission intensity would need 
to devote its entire output to abatement, an outcome inconsistent with positive 
consumption, to avoid irreversibility. The constraint therefore remains binding in 
economies with sufficiently high emission intensity or productivity.

\smallskip

A last point to discuss is \textbf{the robustness of the ecological condition} to nonlinear specifications of abatement. We shall easily show that it is robust. Consider the Ramsey model of Section 3 but with the two following modifications:
\begin{equation*}
\dot{K} = AK^\alpha - C - g(\theta) - \delta K, \qquad K(0) = K_0 > 0,
\end{equation*}
and
\begin{equation*}
\dot{P} = E\left(AK^\alpha, \theta\right) - f(P), \qquad P(0) = P_0 > 0,
\end{equation*}
where $\theta$ is the abatement effort, reducing emissions $E(.)$ in the pollution law of motion (that is $\frac{\partial E}{\partial \theta}\equiv E_{\theta}<0$) but increasing the cost of abatement in terms of the final good (that is $g'(\theta)>0$). Writing the current-value Hamiltonian like in Section 3, with $\lambda_1$ the co-state variable for $K$ and $\lambda_3$ the co-state variable for $P$, one can establish easily after a few trivial algebraic operations that
\[\lambda_3= \frac{g'(\theta)}{C E_{\theta}},\]
which is negative as it should be given our general specifications. Now writing the co-state equation for pollution at the steady state (equation (25) in Section 3), one gets:
\[f'(P) + \rho= \frac{-\eta \; P^{\mu}}{\lambda_3}>0 .\]
This prove that our ecological condition is robust to the nonlinearity of abatement in the dynamics of pollution provided the model is correctly specified.  

\section{Conclusion}
This paper has embedded irreversible pollution dynamics \`{a} la Forster (1975) into a standard Ramsey optimal growth model with Cobb-Douglas production and log utility. 
We first analysed the benchmark case without abatement and established a simple threshold condition $\gamma A \bar{K}^{\alpha} > \bar{f}$ under which pollution grows without bound and the irreversible regime is optimal. 
We then introduced abatement expenditure $B(t)$ as a control variable and showed that the Hamiltonian first-order conditions reduce the dynamics to a clean three-dimensional system for capital, consumption and pollution, coupled with an algebraic rule for optimal abatement. 

We show essentially that controllability shifts the threshold and makes the economy more resilient, but it cannot overturn the fundamental physical limit posed by the boundedness of $f(P)$. If the discount rate $\rho$ outpaces the natural decay's decline rate, even infinite abatement spending (subject to resource constraints) cannot prevent the optimality of irreversible pollution. These results carry several important policy implications. First, the discount rate plays a dual role: it governs both the standard intertemporal trade-off in the Ramsey model and, through the ecological criterion, the stability of the pollution dynamics. 
A sufficiently low discount rate is a prerequisite for sustainability, as it aligns the planner's patience with the environment's recovery time. Second, abatement technology alone is not a panacea. 
Even with full access to abatement, if Nature's self-cleaning capacity declines too slowly relative to the planner's impatience, the optimal policy is to let pollution accumulate. This also means probably that mitigation tools (like abatement) alone are not enough to avoid irreversible pollution regimes, alternative adaptation avenues may be additionally activated.
Last but not last, the economic criteria highlight that resource constraints matter: abatement is costly, and if the required expenditure exceeds what the economy can finance, environmental policy becomes ineffective.

Several promising extensions suggest themselves.  First, our analysis assumes a deterministic environment. Introducing uncertainty through stochastic shocks to the absorption capacity $f(P)$ or to the emission factor $\gamma$ would allow us to study precautionary abatement and the role of risk aversion in the face of irreversibility. 
Preliminary work by Boucekkine et al. (2025) in a single-state-variable setting suggests that uncertainty can accelerate pollution accumulation as the irreversibility threshold draws near; extending this to a full Ramsey framework would be a natural next step. 
Second, we have treated abatement technology as exogenous. 
Endogenising the productivity of abatement through learning-by-doing or directed technical change (as in Amigues and Durmaz, 2019) would allow the economy to improve its abatement efficiency over time, potentially relaxing the economic feasibility constraints. 
Whether such improvements can overcome the structural ecological condition remains an open question. 
Third, our model considers a single pollutant. 
In reality, multiple pollutants interact through complex biogeochemical cycles; a multi-pollutant extension would enrich the analysis of irreversibility, particularly in the case of chemical cocktails (Persson et al., 2013).


\newpage
\appendix

\section*{APPENDIX}

\section{Proof of Theorem \ref{thm:uncontrolled_steady}}\label{appa}

For purposes of local stability analysis, we derive a linear approximation of nonlinear ODEs (7a)-(7c). Assuming
\begin{equation*}
K(t) = \bar{K} + \Delta K(t), \qquad C(t) = \bar{C} + \Delta C(t), \qquad P(t) = \bar{P} + \Delta P(t),
\end{equation*}
and using the Taylor expansion, the linearized ODE system around the steady state (8)-(10) is
\begin{align}
\Delta \dot{K} &= \rho \Delta K - \Delta C, \nonumber \\
\Delta \dot{C} &= \frac{\bar{C}}{\rho} \, \alpha(\alpha-1) A \bar{K}^{\alpha-2} \, \Delta K, \label{A1} \\
\Delta \dot{P} &= \gamma \alpha A \bar{K}^{\alpha-1} \, \Delta K - f'(\bar{P}) \, \Delta P. \nonumber
\end{align}

The linear ODEs (A1) can be written in matrix notations as $\dot{\mathbf{x}} - J\mathbf{x} = 0$, where $\mathbf{x}(t) = (\Delta K(t), \Delta C(t), \Delta P(t))^{\mathsf{T}}$ and the matrix
\begin{equation}
J = 
\begin{pmatrix}
\rho & -1 & 0 \\[4pt] 
\dfrac{\bar{C}}{\rho} \, \alpha(\alpha-1) A \bar{K}^{\alpha-2} & 0 & 0 \\[6pt]
\gamma \alpha A \bar{K}^{\alpha-1} & 0 & -f'(\bar{P})
\end{pmatrix}
\end{equation}
is the Jacobian of the nonlinear ODE system (7a)-(7c) evaluated at the equilibrium $(\bar{K}, \bar{C}, \bar{P}_i)$, $i=1,2$. Following standard analysis, the stability of the three-dimensional ODE system  around the steady states $(\bar{K}, \bar{C}, \bar{P}_i)$ is determined by the roots of the characteristic equation
\begin{equation}
\det(J - \lambda I) = 0, \label{A3}
\end{equation}
which has three real roots
\begin{align}
\lambda_1 &= -f'(\bar{P}), \label{A4} \\
\lambda_2 &= \frac{\rho - \sqrt{\rho^2 - 4a}}{2}, \\
\lambda_3 &= \frac{\rho + \sqrt{\rho^2 - 4a}}{2},
\end{align}
where
\begin{equation}
a = \frac{\bar{C}}{\rho} \, \alpha(\alpha-1) A \bar{K}^{\alpha-2}.
\end{equation}

For further analysis, let us rewrite the characteristic equation as
\begin{equation}
\bigl(f'(\bar{P}) + \lambda\bigr) \bigl(\lambda^2 - \rho\lambda + a\bigr) = 0, \label{A5}
\end{equation}
where the equation
\begin{equation}
\lambda^2 - \rho\lambda + a = 0
\end{equation}
describes the stability of the standard Ramsey model  without pollution (at $P=0$) around the steady state $(\bar{K}, \bar{C})$. This equation has two real roots $\lambda_1$ and $\lambda_2$ of opposite signs. It means that the equilibrium of the model (7a)-(7b) is a saddle point $(\bar{K}, \bar{C})$, which is a well-known economic fact. Barro et al. (Chapter 2) prove that the optimal trajectory $K(t)$, $C(t)$ in the Ramsey economy converges to the saddle steady-state pair $(\bar{K}, \bar{C})$ and even estimate the speed of convergence.

Now, let us consider the steady state $(\bar{K}, \bar{C}, \bar{P}_1)$ in the model  with pollution. The value $\bar{P}_1$ is calculated using the monotonically increasing function $f_1^{-1}(z)$ (shown in Figure 2). Then $f'(\bar{P}_1) > 0$ and the root $\lambda_1 = -f'(\bar{P}_1)$ is negative; therefore, the pollution $P(t)$ also converges to the steady-state value $\bar{P}_1$.

Similar analysis of the steady state $(\bar{K}, \bar{C}, \bar{P}_2)$ shows that then the root $\lambda_1 = -f'(\bar{P}_2)$ is positive (since $\bar{P}_2 > \hat{P}$ where $f'(\bar{P}_2)<0$). Therefore, the pollution $P(t)$ diverges from the value $\bar{P}_2$ and the steady state is unstable. \hfill$\blacksquare$

\section{Proof of Theorem \ref{thm:saddlepoint}}
\label{app:stability_controlled}

\subsection*{B.1 The Jacobian Matrix}

Linearising the system \eqref{app:Kdot}--\eqref{app:Pdot} around 
the steady state $(\bar{K}, \bar{C}, \bar{P}_1)$, where we suppress 
the dependence on $\bar{\theta}$ for notational convenience, yields 
the Jacobian matrix $J$:

\begin{equation}
    J = \begin{pmatrix} 
        J_{11} & J_{12} & J_{13} \\[4pt] 
        J_{21} & J_{22} & J_{23} \\[4pt] 
        J_{31} & J_{32} & J_{33} 
    \end{pmatrix},
    \label{app:Jacobian}
\end{equation}

where the entries are evaluated at $(\bar{K}, \bar{C}, \bar{P}_1)$:

\begin{align*}
    J_{11} &= \frac{\partial \dot{K}}{\partial K} 
             = \alpha(1-\bar{\theta})A\bar{K}^{\alpha-1} - \delta, \\[4pt]
    J_{12} &= \frac{\partial \dot{K}}{\partial C} = -1, \\[4pt]
    J_{13} &= \frac{\partial \dot{K}}{\partial P} = 0, \\[6pt]
    J_{21} &= \frac{\partial \dot{C}}{\partial K} 
             = \bar{C}\,\alpha(\alpha-1)(1-\bar{\theta})
               A\bar{K}^{\alpha-2}, \\[4pt]
    J_{22} &= \frac{\partial \dot{C}}{\partial C} 
             = \alpha(1-\bar{\theta})A\bar{K}^{\alpha-1} 
               - \delta - \rho = \rho - \rho = 0,\\[4pt]
    J_{23} &= \frac{\partial \dot{C}}{\partial P} = 0, \\[6pt]
    J_{31} &= \frac{\partial \dot{P}}{\partial K} 
             = \alpha\gamma(1-\bar{\theta})A\bar{K}^{\alpha-1}, \\[4pt]
    J_{32} &= \frac{\partial \dot{P}}{\partial C} = 0, \\[4pt]
    J_{33} &= \frac{\partial \dot{P}}{\partial P} = -f'(\bar{P}_1).
\end{align*}

Note that $J_{22} = 0$ follows directly from the steady-state Euler 
condition $\alpha(1-\bar{\theta})A\bar{K}^{\alpha-1} = \delta + \rho$. 
The Jacobian therefore takes the block-triangular form:

\begin{equation}
    J = \begin{pmatrix} 
        J_{11} & -1 & 0 \\[4pt] 
        J_{21} & 0  & 0 \\[4pt] 
        J_{31} & 0  & -f'(\bar{P}_1) 
    \end{pmatrix}.
    \label{app:Jacobian_block}
\end{equation}

\subsection*{B.2 Characteristic Polynomial and Eigenvalues}

The characteristic polynomial of $J$ is:
\begin{equation}
    \det(J - \xi I) = 0,
    \label{app:charpoly}
\end{equation}

where $\xi$ denotes an eigenvalue. Expanding along the third column, 
the block-triangular structure of \eqref{app:Jacobian_block} yields:

\begin{equation}
    \left(-f'(\bar{P}_1) - \xi\right) 
    \cdot \det\begin{pmatrix} J_{11} - \xi & -1 \\ 
                               J_{21} & -\xi \end{pmatrix} = 0.
    \label{app:expand}
\end{equation}

This factorises into two components:

\paragraph{Component 1.} One eigenvalue is given directly by:
\begin{equation}
    \xi_1 = -f'(\bar{P}_1).
    \label{app:xi1}
\end{equation}

Under the ecological condition $f'(\bar{P}_1) + \rho > 0$, we have 
$f'(\bar{P}_1) > -\rho$. Since $\rho > 0$, this is consistent with 
$\xi_1$ being either positive or negative depending on the sign of 
$f'(\bar{P}_1)$. On the ascending branch of $f(P)$, where 
$\bar{P}_1 < \hat{P}$ (with $\hat{P}$ the peak of $f$), we have 
$f'(\bar{P}_1) > 0$, so:
\begin{equation}
    \xi_1 = -f'(\bar{P}_1) < 0.
    \label{app:xi1_sign}
\end{equation}

\paragraph{Component 2.} The remaining two eigenvalues solve:
\begin{equation}
    \xi^2 - J_{11}\xi + J_{21} = 0,
    \label{app:quadratic}
\end{equation}

that is:
\begin{equation}
    \xi^2 - \left(\alpha(1-\bar{\theta})A\bar{K}^{\alpha-1} 
    - \delta\right)\xi 
    + \bar{C}\,\alpha(\alpha-1)(1-\bar{\theta})A\bar{K}^{\alpha-2} 
    \cdot(-1) \cdot (-1) = 0.
\end{equation}

Using the steady-state condition 
$\alpha(1-\bar{\theta})A\bar{K}^{\alpha-1} = \delta + \rho$, 
we can write $J_{11} = \rho$, so the quadratic becomes:
\begin{equation}
    \xi^2 - \rho\,\xi - \bar{C}\,\alpha(1-\alpha)
    (1-\bar{\theta})A\bar{K}^{\alpha-2} = 0.
    \label{app:quadratic_simplified}
\end{equation}

The product of the two roots $\xi_2, \xi_3$ satisfies:
\begin{equation}
    \xi_2 \cdot \xi_3 = -\bar{C}\,\alpha(1-\alpha)
    (1-\bar{\theta})A\bar{K}^{\alpha-2} < 0,
    \label{app:product}
\end{equation}

since $\alpha \in (0,1)$ implies $1 - \alpha > 0$, and all other 
terms are strictly positive. A negative product of roots implies 
that $\xi_2$ and $\xi_3$ have \emph{opposite signs}: one is strictly 
negative and one is strictly positive.

\subsection*{B.3 Eigenvalue Sign Pattern and Stability}

Collecting the results of Sections A.2--A.3, the three eigenvalues 
of $J$ satisfy:

\begin{align}
    \xi_1 &< 0 \quad 
    \text{(from the ecological condition } 
    f'(\bar{P}_1) > 0 \text{ on the ascending branch)}, 
    \label{app:sign1} \\[4pt]
    \xi_2 &< 0 \quad \text{(stable root of the quadratic)}, 
    \label{app:sign2} \\[4pt]
    \xi_3 &> 0 \quad \text{(unstable root of the quadratic)}.
    \label{app:sign3}
\end{align}

The system therefore has:
\begin{itemize}
    \item \textbf{Two negative eigenvalues} $(\xi_1, \xi_2)$, 
    corresponding to a \textbf{two-dimensional stable manifold};
    \item \textbf{One positive eigenvalue} $(\xi_3)$, 
    corresponding to a \textbf{one-dimensional unstable manifold}.
\end{itemize}
The Blanchard-Kahn condition requires the number of unstable 
eigenvalues to equal the number of jump variables. So these conditions are  satisfied. Therefore, for any initial condition 
$(K_0, P_0)$ in the basin of attraction 
$\{(K,P): K > 0,\, P \in (0, \bar{P}_2(\bar{\theta}))\}$, 
there exists a \textbf{unique} optimal trajectory 
$(K^*(t), C^*(t), P^*(t))$ converging to the steady state 
$(\bar{K}(\bar{\theta}), \bar{C}(\bar{\theta}), 
\bar{P}_1(\bar{\theta}))$.

\subsection*{B.4 Failure of Stability on the Descending Branch}

For completeness, we verify that no saddlepoint steady state exists 
on the descending branch of $f(P)$, where 
$\bar{P} > \hat{P}$ and $f'(\bar{P}) < 0$.

If $f'(\bar{P}_2) < 0$ and the ecological condition 
$f'(\bar{P}_2) + \rho > 0$ fails, that is, 
$f'(\bar{P}_2) < -\rho$, then from \eqref{app:xi1}:
\begin{equation}
    \xi_1 = -f'(\bar{P}_2) > \rho > 0.
    \label{app:xi1_unstable}
\end{equation}

Combined with the sign pattern of $\xi_2 < 0$ and $\xi_3 > 0$ 
from the quadratic \eqref{app:quadratic_simplified}, the system 
would have \textbf{two positive eigenvalues} and one negative 
eigenvalue. The Blanchard-Kahn condition would 
require two jump variables to match two unstable eigenvalues, 
but only $C(0)$ is a jump variable. The condition fails, and 
\textbf{no unique convergent path exists}. This confirms that 
the ecological condition $f'(\bar{P}_1) + \rho > 0$ is a 
structural prerequisite for saddlepoint stability, not a 
consequence of economic policy. \hfill$\blacksquare$

\end{document}